\documentclass[a4paper]{article}%
\usepackage{amsmath}
\usepackage{amsfonts}
\usepackage{amssymb}
\usepackage{graphicx}
\usepackage[a4paper]{geometry}
\usepackage{appendix}%
\setcounter{MaxMatrixCols}{30}
%TCIDATA{OutputFilter=latex2.dll}
%TCIDATA{Version=5.00.0.2606}
%TCIDATA{CSTFile=40 LaTeX article.cst}
%TCIDATA{Created=Tuesday, July 24, 2007 17:58:41}
%TCIDATA{LastRevised=Monday, November 26, 2007 23:14:31}
%TCIDATA{<META NAME="GraphicsSave" CONTENT="32">}
%TCIDATA{<META NAME="SaveForMode" CONTENT="1">}
%TCIDATA{BibliographyScheme=Manual}
%TCIDATA{<META NAME="DocumentShell" CONTENT="Standard LaTeX\Blank - Standard LaTeX Article">}
%TCIDATA{Language=American English}
\newtheorem{theorem}{Theorem}[section]

\newtheorem{corollary}[theorem]{Corollary}

\newtheorem{definition}[theorem]{Definition}

\newtheorem{lemma}[theorem]{Lemma}

\newenvironment{proof}[1][Proof]{\noindent\textbf{#1.} }{\ \rule{0.5em}{0.5em}}

\begin{document}

\title{On the optimization of the principal eigenvalue for single-centre
point-interaction operators in a bounded region}
\author{Pavel Exner\thanks{Department of Theoretical Physics, NPI, Academy of
Sciences, 25068 \v{R}e\v{z} near Prague, and Doppler Institute, Czech
Technical University, B\v{r}ehov\'{a} 7, 11519 Prague, Czechia.}, Andrea
Mantile\thanks{Doppler Institute, Czech Technical University, B\v{r}ehov\'{a}
7, 11519 Prague, Czechia, and IRMAR, Universit\'{e} Rennes 1, Campus de
Beaulieu, 35042, Rennes Cedex, France.}}
\date{}
\maketitle

\begin{abstract}
We investigate relations between spectral properties of a single-centre
point-interaction Hamiltonian describing a particle confined to a bounded
domain $\Omega\subset\mathbb{R}^{d},\: d=2,3$, with Dirichlet boundary, and
the geometry of $\Omega$. For this class of operators Krein's formula yields
an explicit representation of the resolvent in terms of the integral kernel of
the unperturbed one, $\left(  -\Delta_{\Omega}^{D}+z\right)  ^{-1}$. We use a
moving plane analysis to characterize the behaviour of the ground-state energy
of the Hamiltonian with respect to the point-interaction position and the
shape of $\Omega$, in particular, we establish some conditions showing how to
place the interaction to optimize the principal eigenvalue.

\end{abstract}

\setcounter{equation}{0}
%%%%%%%%%%%%%%%%%%%%%%%%%%%%%%%%%%%%%%%%%%%%%%%%%%

\section{Introduction}

Relations between geometry of a domain and spectral properties of
corresponding operators belong to the most traditional question in
mathematical physics; one can recall, e.g., the Faber-Krahn inequality
\cite{Fa23, Kr25} or the Payne-P\'olya-Weinberger conjecture \cite{PPW55}
proved by Ashbaugh and Benguria \cite{AB92a, AB92b}. A more recent example
concerns the situation where the domain in question is not simply connected
and one asks, in particular, how to place a circular hard-wall obstacle within
a circular planar cavity to minimize the ground-state eigenvalue; it appears
that the minimum is reached when the obstacle touches the boundary
\cite{Harrell}.

In connection with the last mentioned problem one can also ask what happens if
such a \textquotedblleft hard\textquotedblright\ obstacle is replaced by
another object, say, by a potential barrier or well. In this paper we are
going to address this question in the particular case when such a potential is
singular, in other words, a point interaction. Recalling basic results about
these interactions \cite{Albeverio} we see that the problem makes sense in
dimension $d\leq3$. Furthermore, a simple perturbative argument shows that in
the one dimensional situation the answer may depend on the sign of the
$\delta$ potential; we restrict here our attention to the more singular case
of dimension $d=2,3$.

Problems of this type were to our knowledge solved so far only in cases where
the domain has a simple geometry such as a straight strip in $\mathbb{R}^{2}$,
see \cite{EGST96}, or a planar layer in $\mathbb{R}^{3}$, see \cite{EN02},
where the eigenvalue problem can be solved more or less explicitly. Here we
consider that the domain $\Omega$, to which the particle is confined, is
bounded and otherwise quite general, see below. The operator of interest will
be the corresponding Dirichlet Laplacian perturbed by a single point
interaction of a fixed coupling constant; we will ask about the dependence of
its principal eigenvalue on the perturbation position. Using a figurative
expression, to be made precise below, we are going to show that the
ground-state energy \emph{increases} as the point moves towards the boundary
of $\Omega$, in contrast to the case of a Dirichlet obstacle mentioned above.

Our method is based on the fact that one is able to express the resolvent of
the operator in question by means of the Krein's formula. The resulting
spectral condition allow us to characterize the principal eigenvalue as a
function of the interaction position in a non perturbative setting. Then we
exploit the maximum principle and a domain reflection technique, analogous to
the one used in \cite{Harrell}, to demonstrate our main result,
Theorem~\ref{th: main}, expressing strict monotonicity of the principal
eigenvalue with respect to certain directions. Using this conclusion, we
formulate then some conditions under which the eigenvalue reaches its minimum
value for a fixed $\Omega$.

\setcounter{equation}{0}
%%%%%%%%%%%%%%%%%%%%%%%%%%%%%%%%%%%%%%%%%%%%%%%%%%

\section{Confined point interactions in two and three dimensions}

The definition domain and the spectral properties of point interaction
Hamiltonians in dimensions two and three are usually expressed in terms of the
free Green function, i.e. the integral kernel of the operator $\left(
-\Delta+z\right)  ^{-1}$, cf. \cite{Albeverio}. This is true both if the
configuration space is the whole $\mathbb{R}^{d}$ or if the particle is
confined to a subset $\Omega$ of it by a hard wall corresponding to Dirichlet
boundary condition. In the latter case Green's function is defined by the
equation
%-------------- %
\begin{equation}
\left\{
\begin{array}
[c]{l}%
\medskip\left(  -\Delta+z\right)  \mathcal{G}_{0}^{z}(\underline{x}%
,\underline{x}^{\prime})=\delta(\underline{x}-\underline{x}^{\prime})\\
\left.  \mathcal{G}_{0}^{z}(\underline{x},\underline{x}^{\prime})\right\vert
_{\underline{x}\in\partial\Omega}=0
\end{array}
\right.  ;\quad\underline{x}^{\prime}\in\Omega\label{Green 1}%
\end{equation}
%-------------- %
which admits a solution in $L^{2}(\Omega)$ whenever $-z$ does not
belong to the spectrum of the Dirichlet Laplacian
$-\Delta_{\Omega}^{D}$ defined in the standard way through the
associated quadratic form \cite{Reed4}. Throughout
this paper, $\Omega$ is supposed to be an open regular set in $\mathbb{R}^{d}%
$, $d=2,3$, bounded and connected, not necessarily simply; we assume that
$\partial\Omega$ is piecewise $C^{1}$.

Under these hypotheses $-\Delta_{\Omega}^{D}$ has a purely discrete spectrum;
we denote by $\left\{  \lambda_{n}\right\}  _{n\in\mathbb{N}_{0}}$, where
$\mathbb{N}_{0}:=\mathbb{N\cup}\left\{  0\right\}  $, its eigenvalues, and by
$\left\{  \psi_{n,k}\right\}  _{\substack{k=1\\n\in\mathbb{N}_{0}}}^{N_{n}}$
the corresponding system of eigenfunctions, $N_{n}$ being the multiplicity of
the $n$-th eigenvalue. Projecting (\ref{Green 1}) on the vectors $\psi_{n,k}$,
it is easy to check the validity of the following standard Fourier expansion
of $\mathcal{G}_{0}^{z}$,
%-------------- %
\begin{equation}
\mathcal{G}_{0}^{z}(\underline{x},\underline{x}^{\prime})=\sum_{\substack{n\in
\mathbb{N}_{0}\\k\leq N_{n}}}\frac{\psi_{n,k}(\underline{x}^{\prime}%
)\,\psi_{n,k}(\underline{x})}{\lambda_{n}+z}\,. \label{Green 2}%
\end{equation}
%-------------- %
An alternative representation of the kernel, which will be extensively used in
this paper, can be given in terms of the 'free' Green's function $\mathcal{G}
^{z} (\underline{x}, \underline{x}^{\prime})$, in other words, the integral
kernel of $\left(  -\Delta+z\right)  ^{-1}$ in the whole space\footnote{Here
and in the following we use the convention in which the negative real axis
represents the cut of the square root $\sqrt{z}$ in the complex plane.},%
%-------------- %
\begin{align}
\mathcal{G}^{z}(\underline{x},\underline{x}^{\prime})  &  \!=\!\frac{1}{2\pi
}K_{0}(\sqrt{z}\left\vert \underline{x}-\underline{x}^{\prime}\right\vert
)\quad\mathrm{in}\ \mathbb{R}^{2}\label{Green free_2d}\\
\mathcal{G}^{z}(\underline{x},\underline{x}^{\prime})  &  \!=\!\frac
{e^{-\sqrt{z}\left\vert \underline{x}-\underline{x}^{\prime}\right\vert }%
}{4\pi\left\vert \underline{x}-\underline{x}^{\prime}\right\vert }%
\qquad\mathrm{in}\ \mathbb{R}^{3} \label{Green free_3d}%
\end{align}
%-------------- %
Here $K_{0}$ denotes the Macdonald (or modified Hankel) function.
Using the boundary conditions in (\ref{Green 1}), we can express
the kernel of interest as
%-------------- %
\begin{equation}
\mathcal{G}_{0}^{z}(\underline{x},\underline{x}^{\prime})=\mathcal{G}%
^{z}(\underline{x},\underline{x}^{\prime})-h(\underline{x},\underline
{x}^{\prime},\sqrt{z}) \label{Green 3}%
\end{equation}
%-------------- %
where $\mathcal{G}^{z}$ is defined by (\ref{Green free_2d}%
)--(\ref{Green free_3d}) and $h(\underline{x}, \underline{x}^{\prime},\sqrt
{z})$ solves the boundary value problem%
%-------------- %
\begin{equation}
\left\{
\begin{array}
[c]{l}%
\medskip\left(  -\Delta+z\right)  h(\underline{x},\underline{x}^{\prime}%
,\sqrt{z})=0\\
\left.  h(\underline{x},\underline{x}^{\prime},\sqrt{z})\right\vert
_{\underline{x}\in\partial\Omega}=\left.  \mathcal{G}^{z}(\underline
{x},\underline{x}^{\prime})\right\vert _{\underline{x}\in\partial\Omega}%
\end{array}
\right.  \mathrm{for\;any}\;\;\underline{x}^{\prime}\in\Omega\label{h 1}%
\end{equation}
%-------------- %

In the next step we construct the operator which will be our main
object perturbing $-\Delta_{\Omega}^{D}$ by a single point
interaction with the support at a point
$\underline{x}_{0}\in\Omega$. Such Hamiltonians are defined by the
self-adjoint extensions of the symmetric operator
%-------------- %
\begin{equation}
\left\{
\begin{array}
[c]{l}%
\medskip D(H_{0})=\left\{  \left.  \psi\in H^{2}\cap H_{0}^{1}(\Omega
)\right\vert \,\psi(\underline{x}_{0})=0\right\} \\
H_{0}\psi=-\Delta\psi
\end{array}
\right.  \label{H_0}%
\end{equation}
%-------------- %
Following the von Neumann theory \cite{Akhiezer, Reed2}, we
observe that the restriction (\ref{H_0}) has deficiency indices
$(1,1)$; consequently, we arrive at a one-parameter family of
self-adjoint operators $H_{\alpha}$. For fixed
$\alpha\in\mathbb{R}$, $\lambda\in\mathbb{C}\backslash\mathbb{R}$
and a domain $\Omega\subset\mathbb{R}^{2}$ we have
%-------------- %
\begin{gather}
D(H_{\alpha})=\bigg\{ \psi\in L^{2}(\Omega) \Big\vert\: \psi=\phi^{\lambda}
+q\mathcal{G}_{0}^{\lambda}(\cdot,\underline{x}_{0}),\ \phi^{\lambda}\in
H^{2}\cap H_{0}^{1}(\Omega),\nonumber\\
\qquad\qquad\phi^{\lambda}(\underline{x}_{0})=\frac{q}{2\pi}\,\bigg( \alpha
-\ln\sqrt{\lambda}-2\pi\,h(\underline{x}_{0},\underline{x}_{0} ,\sqrt{\lambda
})\bigg) \,\bigg\} \label{H_alpha_2d dom}%
\end{gather}
%-------------- %
while for $\Omega\subset\mathbb{R}^{3}$ the operator domain is
%-------------- %
\begin{align}
D(H_{\alpha})  &  =\bigg\{\psi\in L^{2}(\Omega)\Big\vert:\psi=\phi^{\lambda
}+q\mathcal{G}_{0}^{\lambda}(\cdot,\underline{x}_{0}),\ \phi^{\lambda}\in
H^{2}\cap H_{0}^{1}(\Omega),\nonumber\\
&  \qquad\phi^{\lambda}(\underline{x}_{0})=q\,\bigg(\alpha+\frac{\sqrt
{\lambda}}{4\pi}+h(\underline{x}_{0},\underline{x}_{0},\sqrt{\lambda
})\bigg)\,\bigg\} \label{H_alpha_3d dom}%
\end{align}
%-------------- %
In both cases the parameter $\lambda$ determines a representation
of the operator domain, roughly speaking, a split between the
regular and singular part; for a fixed choice of $\lambda$ the
action of $H_{\alpha}$ is the following
%-------------- %
\begin{equation}
H_{\alpha}\psi=-\Delta\phi^{\lambda}-\lambda q\,\mathcal{G}_{0}^{\lambda
}(\cdot,\underline{x}_{0}) \,. \label{H_alpha}%
\end{equation}
%-------------- %
It is convenient to include into this description also infinite values of the
parameter $\alpha$, in which case the coefficient $q$ of the singular part ---
sometimes referred to as the \emph{charge} of the state --- vanishes. It is
equivalent to the absence of the point interaction: the domain reduces in this
case to $H^{2}\cap H_{0}^{1}(\Omega)$ and the self-adjoint extensions
corresponding to $\alpha=\pm\infty$ is identified with the unperturbed
operator $-\Delta_{\Omega}^{D}$.

Making use of Krein's formula \cite{Akhiezer} the action of the resolvent,
$R_{z}^{\alpha}=(H_{\alpha}+z)^{-1}$ on $L^{2} (\Omega)$, can be expressed as
a rank-one perturbation of its 'free' counterpart $(-\Delta+z)^{-1}$,
specifically
%-------------- %
\begin{equation}
R_{z}^{\alpha}\varphi=(-\Delta+z)^{-1}\varphi+q_{z}\left(  (-\Delta
+z)^{-1}\varphi\right)  (\underline{x}_{0})\,\mathcal{G}_{0}^{z}
(\cdot,\underline{x}_{0})\,. \label{Krein}%
\end{equation}
%-------------- %
In this formula, of course, the first term is the regular part of
the function $R_{z}^{\alpha}\varphi\in D(H_{\alpha})$ while the
value $q_{z}\left( (-\Delta+z)^{-1}\varphi\right)
(\underline{x}_{0})$ denotes the corresponding charge. Using the
boundary conditions in (\ref{H_alpha_2d dom}) and
(\ref{H_alpha_3d dom}) we can identify the coefficient $q_{z}$ with%
%-------------- %
\begin{equation}
q_{z}=\left(  \alpha-\ln\sqrt{z}-2\pi\,h(\underline{x}_{0},\underline{x}%
_{0},\sqrt{z})\right)  ^{-1},\quad\Omega\subset\mathbb{R}^{2} \label{Krein_2d}%
\end{equation}
%-------------- %
\begin{equation}
q_{z}=\left(  \alpha+\frac{\sqrt{z}}{4\pi}+h(\underline{x}_{0},\underline
{x}_{0},\sqrt{z})\right)  ^{-1},\quad\Omega\subset\mathbb{R}^{3}
\label{Krein_3d}%
\end{equation}
%-------------- %
Furthermore, the Fourier expansion
%-------------- %
\begin{equation}
(-\Delta+z)^{-1}\varphi=\sum_{\substack{n\in\mathbb{N}_{0}\\k\leq N_{n}}%
}\frac{\left(  \varphi,\psi_{n,k}\right)  }{\lambda_{n}+z}\psi_{n,k}
\label{Free propagator}%
\end{equation}
%-------------- %
yields the following explicit expression%
%-------------- %
\begin{equation}
R_{z}^{\alpha}\varphi=\sum_{\substack{n\in\mathbb{N}_{0}\\k\leq N_{n}}%
}\frac{\left(  \varphi,\psi_{n,k}\right)  }{\lambda_{n}+z}\psi_{n,k}+2\pi
\sum_{\substack{n\in\mathbb{N}_{0}\\k\leq N_{n}}}\frac{\left(  \varphi
,\psi_{n,k}\right)  }{\lambda_{n}+z}\frac{\psi_{n,k}(\underline{x}_{0}%
)}{\alpha-\ln\sqrt{z}-2\pi\,h(\underline{x}_{0},\underline{x}_{0},\sqrt{z}%
)}\mathcal{G}_{0}^{z}(\cdot,\underline{x}_{0}) \label{R_alpha_2d}%
\end{equation}
%-------------- %
for $\Omega\subset\mathbb{R}^{2}$, and its counterpart
%-------------- %
\begin{equation}
R_{z}^{\alpha}\varphi=\sum_{\substack{n\in\mathbb{N}_{0}\\k\leq N_{n}}%
}\frac{\left(  \varphi,\psi_{n,k}\right)  }{\lambda_{n}+z}\psi_{n,k}%
+\sum_{\substack{n\in\mathbb{N}_{0}\\k\leq N_{n}}}\frac{\left(  \varphi
,\psi_{n,k}\right)  }{\lambda_{n}+z}\frac{\psi_{n,k}(\underline{x}_{0}%
)}{\alpha+\frac{\sqrt{z}}{4\pi}+h(\underline{x}_{0},\underline{x}_{0},\sqrt
{z})}\mathcal{G}_{0}^{z}(\cdot,\underline{x}_{0}) \label{R_alpha_3d}%
\end{equation}
%-------------- %
for $\Omega\subset\mathbb{R}^{3}$.

\setcounter{equation}{0}
%%%%%%%%%%%%%%%%%%%%%%%%%%%%%%%%%%%%%%%%%%%%%%%%%%

\section{The principal eigenvalue of $H_{\alpha}$}

As usual, the Krein formula allows us to determine the spectrum through the
denominator of the perturbation term. In particular, it follows from the
resolvent equations (\ref{R_alpha_2d})--(\ref{R_alpha_3d}) that the spectrum
of $H_{\alpha}$ is formed by the solutions of the equations
%-------------- %
\begin{equation}
\left.
\begin{array}
[c]{l}%
\bigskip\alpha-\ln\sqrt{-\xi}-2\pi\,h(\underline{x}_{0},\underline{x}%
_{0},\sqrt{-\xi})=0\,,\qquad\Omega\subset\mathbb{R}^{2}\\
\alpha+\frac{\sqrt{-\xi}}{4\pi}+h(\underline{x}_{0},\underline{x}_{0}%
,\sqrt{-\xi})=0\,,\qquad\Omega\subset\mathbb{R}^{3}%
\end{array}
\right.  \label{eigenvalue 0}%
\end{equation}
%-------------- %
to which one has to add eigenvalues of $-\Delta_{\Omega}^{D}$, the
degenerate ones in any case and the non-degenerate ones,
$\lambda_{\bar{n}}$, provided that the corresponding eigenfunction
$\psi_{\bar{n}}$ satisfies the condition (see e.g. in
\cite{Blanchard})
%-------------- %
\begin{equation}
\psi_{\bar{n}}(\underline{x}_{0})=0
\end{equation}
%-------------- %
This does not concern, however, the bottom of the spectrum which
we are interested in here, because the ground state of
$-\Delta_{D}^{\Omega}$ is non-degenerate and can be represented by
a positive function.

The subject of this section is the principal eigenvalue of the
point-interaction operator $H_{\alpha}$. We are going to show, in particular,
that for any $\alpha\in\mathbb{R}$ there exists a unique simple eigenvalue of
$H_{\alpha}$ below the spectral threshold of $-\Delta_{\Omega}^{D}$. As a
preliminary, we need to characterize the derivatives of $h(\underline{x} _{0},
\underline{x}_{0}, \sqrt{z})$ w.r.t. the variable $\sqrt{z}$.

%-------------- %
\begin{lemma}
\label{Lemma 1} Let $z$ be a positive real number; the function $y\mapsto
h(\underline{x},\underline{x}^{\prime},y)$, defined by (\ref{h 1}) with
$y:=\sqrt{z}$, satisfies the conditions
%-------------- %
\begin{equation}
\left.
\begin{array}
[c]{l}%
\bigskip\frac{1}{y}\left(  \frac{1}{y}+2\pi\,\partial_{y}h(\underline
{x}^{\prime},\underline{x}^{\prime},y)\right)  >0\,,\qquad\Omega
\subset\mathbb{R}^{2}\\
\frac{1}{y}\left(  \frac{1}{4\pi}+\partial_{y}h(\underline{x}^{\prime
},\underline{x}^{\prime},y)\right)  >0\,,\qquad\Omega\subset\mathbb{R}^{3}%
\end{array}
\right.  \label{h 1.1}%
\end{equation}
%-------------- %
Furthermore, for any $z<0$ and $h(\underline{x}, \underline{x}^{\prime},iy)$,
defined by (\ref{h 1}) with $y:=\sqrt{\left\vert z\right\vert }$, we have%
%-------------- %
\begin{equation}
\left.
\begin{array}
[c]{l}%
\bigskip\frac{\pi}{2}+2\pi\,\operatorname{Im}h(\underline{x}^{\prime
},\underline{x}^{\prime},i\,y)=0\,,\qquad\Omega\subset\mathbb{R}^{2}\\
\frac{y}{4\pi}+\operatorname{Im}h(\underline{x}^{\prime},\underline{x}%
^{\prime},i\,y)=0\,,\qquad\Omega\subset\mathbb{R}^{3}%
\end{array}
\right.  \label{h 1.2}%
\end{equation}
%-------------- %
and
%-------------- %
\begin{equation}
\left.
\begin{array}
[c]{l}%
\bigskip\frac{1}{y}+2\pi\,\partial_{y}\operatorname{Re}h(\underline{x}%
^{\prime},\underline{x}^{\prime},i\,y)<0\,,\qquad\Omega\subset\mathbb{R}^{2}\\
\partial_{y}\operatorname{Re}h(\underline{x}^{\prime},\underline{x}^{\prime
},i\,y)<0\,,\qquad\Omega\subset\mathbb{R}^{3}%
\end{array}
\right.  \label{h 1.3}%
\end{equation}
%-------------- %
\end{lemma}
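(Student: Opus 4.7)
My approach is to use the representation $h(x,x',\sqrt{z})=\mathcal{G}^{z}(x,x')-\mathcal{G}_0^{z}(x,x')$ coming from (\ref{Green 3}) together with the operator identity $\partial_{z}(-\Delta_\Omega^D+z)^{-1}=-\bigl[(-\Delta_\Omega^D+z)^{-1}\bigr]^2$, specialised to the kernel at the diagonal after the (common, $z$-independent) short-range singularity of $\mathcal G^z$ and $\mathcal G_0^z$ has been subtracted. The expansions $K_0(w)=-\ln(w/2)-\gamma+O(w^2\ln w)$ and $e^{-w}/w=1/w-1+O(w)$ yield the finite parts
\[
(\mathcal{G}^{z})_{\mathrm{fp}}(x_0,x_0)=
\begin{cases} -\tfrac{1}{2\pi}\ln(\sqrt z/2)-\tfrac{\gamma}{2\pi}, & d=2,\\[2pt] -\tfrac{\sqrt z}{4\pi}, & d=3, \end{cases}
\]
while the operator identity, applied off the diagonal and then carried to the coincidence limit, gives
\[
\partial_{z}(\mathcal{G}_0^{z})_{\mathrm{fp}}(x_0,x_0)=-\int_{\Omega}[\mathcal{G}_0^{z}(x_0,x)]^2\,dx\,,
\]
the integral being convergent since $[\mathcal{G}_0^{z}(x_0,x)]^2$ has only the integrable singularities $|x-x_0|^{-2}$ in 3D or $(\ln|x-x_0|)^2$ in 2D near $x=x_0$.

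For (\ref{h 1.1}), subtracting the two displays and converting $\partial_{z}=(2y)^{-1}\partial_{y}$ with $y=\sqrt z>0$, the singular contributions from $(\mathcal G^{z})_{\mathrm{fp}}$ exactly cancel the $1/y$ (resp.\ $1/(4\pi)$) constants, leaving
\[
\tfrac{1}{y}\!\left(\tfrac{1}{y}+2\pi\partial_{y}h(x_0,x_0,y)\right)=4\pi\!\int_{\Omega}\![\mathcal{G}_0^{y^2}(x_0,x)]^2\,dx
\]
in 2D and $\tfrac{1}{y}\bigl(\tfrac{1}{4\pi}+\partial_{y}h(x_0,x_0,y)\bigr)=2\!\int_{\Omega}[\mathcal{G}_0^{y^2}(x_0,x)]^2\,dx$ in 3D, both strictly positive because $\mathcal{G}_0^{y^2}(x_0,\cdot)\not\equiv 0$. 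The identity (\ref{h 1.2}) is purely algebraic: (\ref{Green 2}) shows $\mathcal{G}_0^z$ is real whenever $-z\notin\sigma(-\Delta_\Omega^D)$, so for $z=-y^2$ the decomposition forces $\operatorname{Im}h(x,x_0,iy)=\operatorname{Im}\mathcal{G}^{-y^2}(x,x_0)$; the short-distance limits $\operatorname{Im}K_0(iy\rho)\to -\pi/2$ (from $\ln i=i\pi/2$) and $-\sin(y\rho)/(4\pi\rho)\to -y/(4\pi)$ as $\rho=|x-x_0|\to 0$ then produce exactly the values claimed. For (\ref{h 1.3}) the derivation mirrors the one for (\ref{h 1.1}), but applied to $\operatorname{Re}h$ with argument $iy$: the new finite parts of $\operatorname{Re}\mathcal{G}^{-y^2}$ at $x_0$ are $-\tfrac{1}{2\pi}\ln(y/2)-\gamma/(2\pi)$ in 2D and $0$ in 3D (from $\cos(y\rho)/(4\pi\rho)=1/(4\pi\rho)+O(\rho)$ the finite remainder vanishes), while the chain-rule factor $dz/dy=-2y$ inserts a minus sign, producing $\tfrac{1}{y}+2\pi\partial_y\operatorname{Re}h=-4\pi y\!\int_\Omega[\mathcal{G}_0^{-y^2}(x_0,x)]^2\,dx$ in 2D and $\partial_y\operatorname{Re}h=-2y\!\int_\Omega[\mathcal{G}_0^{-y^2}(x_0,x)]^2\,dx$ in 3D, both strictly negative.

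The main technical obstacle is the rigorous justification of the diagonal identity $\partial_{z}(\mathcal G_0^z)_{\mathrm{fp}}(x_0,x_0)=-\int_\Omega[\mathcal G_0^z(x_0,x)]^2\,dx$: one must commute $\partial_z$ with the coincidence limit $x\to x_0$ after the $z$-independent singular part has been subtracted. I would handle this by first differentiating the spectral expansion (\ref{Green 2}) term by term away from the diagonal --- where analyticity of the resolvent and Weyl-type asymptotics for $\{\lambda_{n}\}$ give uniform convergence on compact subsets of $\Omega\setminus\{x_0\}$ --- and then passing to the limit $x\to x_0$ in the singular-part-subtracted series by dominated convergence; equivalently, one may work in the Hilbert--Schmidt calculus using suitable regularisations of $\delta_{x_0}$.
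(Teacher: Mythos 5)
Your proposal is correct and takes essentially the same route as the paper: the paper derives the key identity $\tfrac{1}{4\pi}+\partial_y h=2y\|\mathcal{G}_0^{y^2}\|_{L^2(\Omega)}^2$ (and its 2D and negative-$z$ analogues) by applying the first resolvent formula to $\mathcal{G}_0^z(\cdot,\underline{x}^\prime)$, subtracting the explicit coincidence limit of the free-Green-function difference, and passing to the limit of the difference quotient, which is precisely the finite-difference form of your operator identity $\partial_z(-\Delta_\Omega^D+z)^{-1}=-\left[(-\Delta_\Omega^D+z)^{-1}\right]^2$ combined with the decomposition $h=\mathcal{G}^z-\mathcal{G}_0^z$. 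The "technical obstacle" you flag is handled implicitly in the paper by keeping the difference quotient at finite step and letting $y_1\to y_2$ only in the scalar identity (\ref{h 1.1_2}), after the singular parts have already cancelled in (\ref{limite}).
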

%-------------- %
\begin{proof}
We start with the 3D case. Let $z_{j}$, $j=1,2$, be a pair of positive values;
setting $y_{j}:=\sqrt{z_{j}}$ we get from equation (\ref{Green 3}) in
combination with the first resolvent formula the relation
%-------------- %
\begin{equation}
\left(  y_{1}^{2}-y_{2}^{2}\right)  \left(  \mathcal{G}_{0}^{y_{1}^{2}%
},\mathcal{G}_{0}^{y_{2}^{2}}\right)  _{L^{2}(\Omega)}=\lim_{\underline
{x}\rightarrow\underline{x}^{\prime}}\left[  \mathcal{G}^{y_{2}^{2}%
}(\underline{x},\underline{x}^{\prime})-\mathcal{G}^{y_{1}^{2}}(\underline
{x},\underline{x}^{\prime})\right]  +h(\underline{x}^{\prime},\underline
{x}^{\prime},y_{1}) -h(\underline{x}^{\prime},\underline{x}^{\prime},y_{2})\,,
\label{h 1.1_1}%
\end{equation}
%-------------- %
where $\left(  \mathcal{\cdot},\mathcal{\cdot}\right)
_{L^{2}(\Omega)}$ denotes the scalar product in $L^{2}(\Omega)$.
The limit at the r.h.s. is
easily seen to be
 %-------------- %
\begin{equation}
\lim_{\underline{x}\rightarrow\underline{x}^{\prime}}\left[  \mathcal{G}%
^{y_{2}^{2}}(\underline{x},\underline{x}^{\prime})-\mathcal{G}^{y_{1}^{2}%
}(\underline{x},\underline{x}^{\prime})\right]  =\frac{1}{4\pi}\left(
y_{1}-y_{2}\right)  \label{limite}%
\end{equation}
%-------------- %
Substituting this expression into (\ref{h 1.1_1}) we get
%-------------- %
\begin{equation}
\left(  y_{1}+y_{2}\right)  \left(  \mathcal{G}_{0}^{y_{1}^{2}},\mathcal{G}%
_{0}^{y_{2}^{2}}\right)  _{L^{2}(\Omega)}=\frac{1}{4\pi}+\frac{h(\underline
{x}^{\prime},\underline{x}^{\prime},y_{1})-h(\underline{x}^{\prime}%
,\underline{x}^{\prime},y_{2})}{y_{1}-y_{2}}\,, \label{h 1.1_2}%
\end{equation}
%-------------- %
and consequently, in the limit $y_{1}\rightarrow y_{2}$ we arrive at%
%-------------- %
\begin{equation}
\frac{1}{4\pi}+\partial_{y}h(\underline{x}^{\prime},\underline{x}^{\prime
},y_{2})=2y_{2}\,\left\Vert \mathcal{G}_{0}^{y_{2}^{2}}\right\Vert
_{L^{2}(\Omega)}^{2}>0\,.
\end{equation}
%-------------- %
On the other hand, for negative values of $z$ a similar argument yields
%-------------- %
\begin{equation}
\left(  y_{1}^{2}-y_{2}^{2}\right)  \left(  \mathcal{G}_{0}^{-y_{1}^{2}%
},\mathcal{G}_{0}^{-y_{2}^{2}}\right)  _{L^{2}(\Omega)}=\lim_{\underline
{x}\rightarrow\underline{x}^{\prime}}\left[  \mathcal{G}^{-y_{1}^{2}%
}(\underline{x},\underline{x}^{\prime})-\left(  \mathcal{G}^{-y_{2}^{2}%
}\right)  ^{\ast}(\underline{x},\underline{x}^{\prime})\right]  -h(\underline
{x}^{\prime},\underline{x}^{\prime},i\,y_{1})+h^{\ast}(\underline{x}^{\prime
},\underline{x}^{\prime},i\,y_{2}) \label{h 1.2_1}%
\end{equation}
%-------------- %
with the asterisk denoting complex conjugation and $y_{j}:=\sqrt{
\left\vert z_{j}\right\vert }$. The limit at the r.h.s. is
%-------------- %
\begin{equation}
\lim_{\underline{x}\rightarrow\underline{x}^{\prime}}\left[  \mathcal{G}%
^{-y_{1}^{2}}(\underline{x},\underline{x}^{\prime})-\left(  \mathcal{G}%
^{-y_{2}^{2}}\right)  ^{\ast}(\underline{x},\underline{x}^{\prime})\right]
=-\frac{i}{4\pi}\left(  y_{1}+y_{2}\right)  \label{limite 1}%
\end{equation}
%-------------- %
from which we get a relation replacing (\ref{h 1.1_2}), namely
%-------------- %
\begin{equation}
\left(  y_{1}^{2}-y_{2}^{2}\right)  \left(  \mathcal{G}_{0}^{-y_{1}^{2}%
},\mathcal{G}_{0}^{-y_{2}^{2}}\right)  _{L^{2}(\Omega)}=-\frac{i}{4\pi}\left(
y_{1}+y_{2}\right)  -h(\underline{x}^{\prime},\underline{x}^{\prime}%
,i\,y_{1})+h^{\ast}(\underline{x}^{\prime},\underline{x}^{\prime},i\,y_{2})\,.
\label{h 1.2_2}%
\end{equation}
%-------------- %
The second one of the relations (\ref{h 1.2}) then follows from
here by setting $y_{1}=y_{2}$. As for the real part of (\ref{h
1.2_1}) given by
%-------------- %
\begin{gather}
\frac{1}{2}\left(  y_{1}^{2}-y_{2}^{2}\right)  \left[  \left(  \mathcal{G}%
_{0}^{-y_{1}^{2}},\mathcal{G}_{0}^{-y_{2}^{2}}\right)  _{L^{2}(\Omega
)}+\left(  \mathcal{G}_{0}^{-y_{2}^{2}},\mathcal{G}_{0}^{-y_{1}^{2}}\right)
_{L^{2}(\Omega)}\right]  =\qquad\qquad\qquad\qquad\qquad\qquad\qquad
\qquad\nonumber\\
=\lim_{\underline{x}\rightarrow\underline{x}^{\prime}}\left[
\operatorname{Re}\mathcal{G}^{-y_{1}^{2}}(\underline{x},\underline{x}^{\prime
})-\operatorname{Re}\mathcal{G}^{-y_{2}^{2}}(\underline{x},\underline
{x}^{\prime})\right]  -\operatorname{Re}h(\underline{x}^{\prime},\underline
{x}^{\prime},i\,y_{1})+\operatorname{Re}h(\underline{x}^{\prime},\underline
{x}^{\prime},i\,y_{2})
\end{gather}
%-------------- %
we notice that the first term at the r.h.s. is in fact zero; dividing the
remaining ones by $y_{1}-y_{2}$ and taking the limit as $y_{1}\rightarrow
y_{2}$, we arrive at
%-------------- %
\begin{equation}
-\partial_{y}\operatorname{Re}h(\underline{x}^{\prime},\underline{x}^{\prime
},i\,y_{2})=2y_{2}\,\left\Vert \mathcal{G}_{0}^{-y_{2}^{2}}\right\Vert
_{L^{2}(\Omega)}^{2}>0\,.
\end{equation}

In the 2D case the validity of relations (\ref{h 1.1}), (\ref{h 1.2}) and
(\ref{h 1.3}) can be checked following the same idea. Taking into account the
logarithmic singularity of $\mathcal{G} ^{z}(\underline{x},\underline
{x}^{\prime})$ as $\underline{x} \rightarrow\underline{x}^{\prime}$ --- see,
e.g., \cite{Abramowitz} --- we find for $z>0$
%-------------- %
\begin{equation}
\lim_{\underline{x}\rightarrow\underline{x}^{\prime}}\left[  \mathcal{G}%
^{y_{2}^{2}}(\underline{x},\underline{x}^{\prime})-\mathcal{G}^{y_{1}^{2}%
}(\underline{x},\underline{x}^{\prime})\right]  =\frac{1}{2\pi}\left(  \ln
y_{1}-\ln y_{2}\right)  \,,
\end{equation}
%-------------- %
while for $z<0$ we have%
%-------------- %
\begin{equation}
\lim_{\underline{x}\rightarrow\underline{x}^{\prime}}\left[  \mathcal{G}%
^{-y_{1}^{2}}(\underline{x},\underline{x}^{\prime})-\left(  \mathcal{G}%
^{-y_{2}^{2}}\right)  ^{\ast}(\underline{x},\underline{x}^{\prime})\right]
=\frac{1}{2\pi}\left(  -i\pi+\ln y_{1}-\ln y_{2}\right)  \,;
\end{equation}
%-------------- %
this concludes the proof.
\end{proof}

\smallskip

%-------------- %
In the next two lemmata, we deal with the solutions of equations
(\ref{eigenvalue 0}) below the spectrum of $-\Delta_{\Omega}^{D}$. We are
going to show that for a fixed real $\alpha$ there is a unique such solution.
It is convenient to treat the 2D and 3D cases separately.
%-------------- %

\begin{lemma}
\label{Lemma 2} Let $\lambda_{0}$ denote the first eigenvalue of
$-\Delta_{\Omega}^{D}$ corresponding to the domain $\Omega\subset
\mathbb{R}^{3}$. For any $\alpha\in\mathbb{R}$, the equation
%-------------- %
\begin{equation}
\alpha+\frac{\sqrt{-\xi}}{4\pi}+h(\underline{x}_{0},\underline{x}_{0}%
,\sqrt{-\xi})=0\,,\quad\xi\in\left(  -\infty,\lambda_{0}\right)
\label{eigenvalue_3d 0}%
\end{equation}
%-------------- %
admits a unique solution, denoted $\xi(\alpha)$, such that%
%-------------- %
\begin{equation}
\lim_{\alpha\rightarrow-\infty}\xi(\alpha)=-\infty\,,\qquad\xi(-h(\underline
{x}_{0},\underline{x}_{0},0))=0\,, \label{condition 1}%
\end{equation}
%-------------- %
and
%-------------- %
\begin{equation}
\lim_{\alpha\rightarrow+\infty}\xi(\alpha)=\lambda_{0}\,. \label{condition 2}%
\end{equation}
%-------------- %
\end{lemma}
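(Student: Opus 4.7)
The strategy is to study the function
$$F(\xi) := \alpha + \frac{\sqrt{-\xi}}{4\pi} + h(\underline{x}_0,\underline{x}_0,\sqrt{-\xi})$$
on the interval $(-\infty,\lambda_0)$. For $\xi \in (0,\lambda_0)$ one writes $\sqrt{-\xi} = i\sqrt{\xi}$; its imaginary part, namely $\sqrt{\xi}/(4\pi) + \operatorname{Im} h(\underline{x}_0,\underline{x}_0,i\sqrt{\xi})$, vanishes identically by the 3D case of (\ref{h 1.2}), so (\ref{eigenvalue_3d 0}) reduces to the real equation $\alpha + \operatorname{Re} h(\underline{x}_0,\underline{x}_0,i\sqrt{\xi}) = 0$. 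The plan is then to establish that $F$ is continuous and strictly decreasing on $(-\infty,\lambda_0)$ with range $\mathbb{R}$, and invoke the intermediate value theorem to obtain a unique root $\xi(\alpha)$.

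Strict monotonicity follows from Lemma 1 via the chain rule with $y := \sqrt{|\xi|}$. On $(-\infty,0)$, where $dy/d\xi = -1/(2y)$, one gets $dF/d\xi = -(2y)^{-1}\bigl(\tfrac{1}{4\pi}+\partial_y h(\underline{x}_0,\underline{x}_0,y)\bigr)$, which is strictly negative by the 3D case of (\ref{h 1.1}). On $(0,\lambda_0)$, where $dy/d\xi = 1/(2y)$, one gets $dF/d\xi = (2y)^{-1}\partial_y \operatorname{Re} h(\underline{x}_0,\underline{x}_0,iy)$, again strictly negative by the 3D case of (\ref{h 1.3}). Continuity at $\xi = 0$ is immediate since $h(\underline{x}_0,\underline{x}_0,0)$ is real (the boundary data of (\ref{h 1}) at $y=0$ are real) and the explicit linear term vanishes there, and substituting $\xi = 0$ into the equation yields at once $\xi(-h(\underline{x}_0,\underline{x}_0,0)) = 0$.

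For the limit $\xi \to -\infty$, the explicit form (\ref{Green free_3d}) shows that $\mathcal{G}^{-\xi}$ restricted to $\partial\Omega$ decays like $e^{-y\,\mathrm{dist}(\underline{x}_0,\partial\Omega)}$ as $y \to \infty$; by the maximum principle applied to (\ref{h 1}), $h(\underline{x}_0,\underline{x}_0,y) \to 0$, and the term $y/(4\pi)$ drives $F$ to $+\infty$. The genuinely delicate step is the limit $\xi \to \lambda_0^-$. Here I would isolate the leading pole in (\ref{Green 2}),
$$\mathcal{G}_0^{-\xi}(\underline{x},\underline{x}_0) = \frac{\psi_0(\underline{x})\,\psi_0(\underline{x}_0)}{\lambda_0-\xi} + \tilde R(\underline{x},\underline{x}_0,\xi),$$
where $\tilde R$ depends analytically on $\xi$ near $\lambda_0$ and inherits from $\mathcal{G}_0^{-\xi}$ its short-distance singularity. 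Combining with (\ref{Green 3}), the difference $\mathcal{G}^{-\xi} - \tilde R$ has a well-defined diagonal trace (the singular parts cancel), so
$$h(\underline{x}_0,\underline{x}_0,\sqrt{-\xi}) = -\frac{\psi_0(\underline{x}_0)^2}{\lambda_0-\xi} + O(1) \qquad \text{as } \xi \to \lambda_0^-.$$
Since $\psi_0$ is strictly positive in $\Omega$ and $\underline{x}_0 \in \Omega$, the residue is negative and $F(\xi) \to -\infty$. The main obstacle is a clean justification of this pole extraction, i.e.\ controlling the cancellation of the short-distance singularities of $\mathcal{G}^{-\xi}$ and $\tilde R$ uniformly in $\xi$ near $\lambda_0$; once this is in hand, the limits of $\xi(\alpha)$ as $\alpha\to\pm\infty$ stated in (\ref{condition 1}) and (\ref{condition 2}) follow from the monotonicity and boundary behaviour of $F$ by inversion.
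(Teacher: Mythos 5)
Your argument is essentially the paper's: split at $\xi=0$, use the monotonicity inequalities of Lemma~\ref{Lemma 1} on each piece, the maximum principle applied to the boundary value problem (\ref{h_3d 2}) for the $y\to+\infty$ decay of $h$, and the divergence of $\operatorname{Re}h$ as $\xi\to\lambda_0^-$. The one step you flag as the main obstacle --- rigorously justifying the leading pole of $h$ at $\lambda_0$ --- is handled in the paper not by a short-distance cancellation argument but by citing Lemma~2 of~\cite{Blanchard}, which supplies the $L^{2}$-convergence statement (\ref{h_3d 4.3}) and hence the blow-up (\ref{h_3d 4.4}); combining this with the strict decrease of $\operatorname{Re}h$ furnished by (\ref{h 1.3}) already fixes the sign of the divergence (\ref{h_3d 4.5}), so the residue computation you sketch is not actually needed.
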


%-------------- %

\begin{proof}
In order to study solutions of (\ref{eigenvalue_3d 0}), we need to find the
dependence of $h(\underline{x}_{0},\underline{x}_{0},\sqrt{-\xi})$ on the
variable $\sqrt{-\xi}$. We start by considering the case $\xi\leq0$. Setting
$y=\sqrt{\left\vert \xi\right\vert }$, the equation (\ref{eigenvalue_3d 0})
assumes the form
%-------------- %
\begin{equation}
\frac{y}{4\pi}=-\alpha-h(\underline{x}_{0},\underline{x}_{0},y)=0\,,\quad
y\geq0 \label{eigenvalue_3d 1}%
\end{equation}
 %-------------- %
and its solutions can be geometrically interpreted as the
abscissas of the intersection points between the curves at the
left and the right-hand side of (\ref{eigenvalue_3d 1}). We will
show that for any fixed choice of $\underline{x}_{0}\in\Omega$,
$h(\underline{x}_{0},\underline{x}_{0},y)$ is a
positive and strictly decreasing function of the variable $y$, such that%
%-------------- %
\begin{equation}
\lim_{y\rightarrow+\infty}h(\underline{x}_{0}, \underline{x}_{0},y)=0\,.
\label{h_3d 1}%
\end{equation}
%-------------- %
Let us consider the boundary value problem
%-------------- %
\begin{equation}
\left\{
\begin{array}
[c]{l}%
\left(  -\Delta+y^{2}\right)  \,h(\underline{x},\underline{x}_{0},y)=0\\
\left.  h(\underline{x},\underline{x}_{0},y)\right\vert _{\underline{x}%
\in\partial\Omega}=\left.  \frac{e^{-y\left\vert \underline{x}-\underline
{x}_{0}\right\vert }}{4\pi\left\vert \underline{x}-\underline{x}%
_{0}\right\vert }\right\vert _{\underline{x}\in\partial\Omega}%
\end{array}
\right.  ;\quad\underline{x}_{0}\in\Omega\label{h_3d 2}%
\end{equation}
%-------------- %
The solution of (\ref{h_3d 2}) is infinitely smooth in the open
set $\Omega$, continuous and positive on its boundary. The strong
maximum principle --- cf.~\cite{Evans} --- in this case allows us
to claim that $h$ is strictly
positive in $\Omega$ reaching its maximum on the boundary,%
%-------------- %
\begin{equation}
0<h(\underline{x},\underline{x}_{0},y)<\sup_{\underline{x^{\prime}}\in
\partial\Omega}\frac{e^{-y\left\vert \underline{x^{\prime}}-\underline{x}%
_{0}\right\vert }}{4\pi\left\vert \underline{x^{\prime}}-\underline{x}%
_{0}\right\vert }\quad\;\mathrm{for} \;\; \forall\underline{x}\in
\Omega,\,y\geq0\,. \label{h_3d 2.1}%
\end{equation}
%-------------- %
Furthermore. the derivative $\partial_{y}h$ satisfies the equation
%-------------- %
\begin{equation}
\left\{
\begin{array}
[c]{l}%
\left(  -\Delta+y^{2}\right)  \,\partial_{y}h(\underline{x},\underline{x}%
_{0},y)=-2yh(\underline{x},\underline{x}_{0},y)\\
\left.  \partial_{y}h(\underline{x},\underline{x}_{0},y)\right\vert
_{\underline{x}\in\partial\Omega}=\left.  -\frac{e^{-y\left\vert \underline
{x}-\underline{x}_{0}\right\vert }}{4\pi}\right\vert _{\underline{x}%
\in\partial\Omega}%
\end{array}
\right.  ;\quad\underline{x}_{0}\in\Omega\label{h_3d 3}%
\end{equation}
%-------------- %
the solution of which belongs to $C^{2}(\Omega)\cap
C(\bar{\Omega})$ in view of the regularity of the source term and
the boundary value. The maximum principle --- see the version
given in \cite[Thm~IX.27]{Brezis} --- in this
case implies%
%-------------- %
\begin{equation}
\partial_{y}h(\underline{x},\underline{x}_{0},y)<0\quad\; \mathrm{for}%
\;\;\forall\underline{x} \in\Omega,\,y>0\,. \label{h_3d 4.1}%
\end{equation}
%-------------- %
In particular, the solution of (\ref{h_3d 3}) for $y=0$ is $\partial
_{y}h(\underline{x},\underline{x}_{0},0)=-\frac{1}{4\pi}$. This
characterization of $h(\underline{x}_{0},\underline{x}_{0},y)$ allows us to
claim that the equation (\ref{eigenvalue_3d 1}) admits at least one solution
for any $\alpha\in\left(  -\infty,\,-h(\underline{x}_{0},\underline{x}%
_{0},0)\right]  $ and that the conditions (\ref{condition 1}) hold. Moreover,
the second one of the relations (\ref{h 1.1}) in Lemma \ref{Lemma 1} implies
the monotonicity of the function $\alpha\mapsto y(\alpha)$ implicitly defined
by (\ref{eigenvalue_3d 1}); this grants the uniqueness of the solution.

Next we turn to (\ref{eigenvalue_3d 0}) for $\xi\in\left(  0,\lambda
_{0}\right)  $. In this case, setting $y:=\sqrt{\xi}$, the equation reads as
%-------------- %
\begin{equation}
\alpha+\frac{i\,y}{4\pi}+h(\underline{x}_{0}, \underline{x}_{0},i\,y)=0\,.
\label{eigenvalue_3d 2}%
\end{equation}
%-------------- %
According to the second one of the relations (\ref{h 1.2}), this is equivalent to%
%-------------- %
\begin{equation}
\alpha+\operatorname{Re}h(\underline{x}_{0}, \underline{x}_{0},i\,y)=0\,,
\label{eigenvalue_3d 2.1}%
\end{equation}
%-------------- %
where $\operatorname{Re}h$ satisfies the boundary value problem
%-------------- %
\begin{equation}
\left\{
\begin{array}
[c]{l}%
\left(  -\Delta-y^{2}\right)  \operatorname{Re}h(\underline{x},\underline
{x}_{0},i\,y)=0\\
\left.  \operatorname{Re}h(\underline{x},\underline{x}_{0},y)\right\vert
_{\underline{x}\in\partial\Omega}=\left.  \frac{\cos y\,\left\vert
\underline{x}-\underline{x}_{0}\right\vert }{4\pi\,\left\vert \underline
{x}-\underline{x}_{0}\right\vert }\right\vert _{\underline{x}\in\partial
\Omega}%
\end{array}
\right.  ;\qquad\underline{x}_{0}\in\Omega\label{h_3d 4.2}%
\end{equation}
%-------------- %
It is worthwhile to notice that $h(\underline{x},
\underline{x}_{0}, y)$ is not defined for $y=\sqrt{\lambda_{0}}$.
In particular, one can show that\footnote{Relation (\ref{h_3d
4.3}) easily follows, e.g., from Lemma 2 in
\cite{Blanchard}.}%
%-------------- %
\begin{equation}
\lim_{\left\vert \varepsilon\right\vert \rightarrow0}\left\Vert \varepsilon
h(\cdot,\underline{x}_{0},i\sqrt{\lambda_{0}-\varepsilon})+\psi_{0}%
(\underline{x}_{0})\,\psi_{0}(\cdot)\right\Vert _{L^{2}(U)}=0\,,
\label{h_3d 4.3}%
\end{equation}
%-------------- %
where $\psi_{0}$ is the principal eigenstate of the Dirichlet Laplacian and
$U$ is any subset of $\Omega$. In view of the boundedness of $\psi_{0}$ and
the arbitrariness of $U$, this relation also implies
%-------------- %
\begin{equation}
\lim_{y\rightarrow\sqrt{\lambda_{0}}^{-}}\left\vert h(\underline{x}%
_{0},\underline{x}_{0},i\,y)\right\vert =+\infty\label{h_3d 4.4}%
\end{equation}
%-------------- %
Using this result together with the conditions (\ref{h 1.2}) and
(\ref{h 1.3})
of Lemma \ref{Lemma 1}, we conclude that $\operatorname{Re}h(\underline{x}%
_{0},\underline{x}_{0},i\,y)$ is a strictly decreasing function of
$y\in\left(  0,\sqrt{\lambda_{0}}\right)  $ whose behavior for $y\rightarrow
\sqrt{\lambda_{0}}$ is given by%
%-------------- %
\begin{equation}
\lim_{y\rightarrow\sqrt{\lambda_{0}}^{-}}\operatorname{Re}h(\underline{x}%
_{0},\underline{x}_{0},i\,y)=-\infty\label{h_3d 4.5}%
\end{equation}
%-------------- %
Summing up this discussion, the equation (\ref{eigenvalue_3d 2.1}) has a
unique positive solution $y=y(\alpha)$ for any $\alpha\in\left(
-h(\underline{x} _{0},\underline{x}_{0}, 0),\,+\infty\right)  $, and this
solution asymptotically approaches the value $\lambda_{0}$ as $\alpha
\rightarrow+\infty$; in combination with the first part this concludes the
proof of the lemma.
\end{proof}

\smallskip

Next we deal with the eigenvalue equation in the two-dimensional case. Recall
that the free Green's function related to this problem is the modified Bessel
function $K_{0}$, which is strictly positive and convex in $\mathbb{R}^{+}$
and admits the following representation \cite{Abramowitz}
%-------------- %
\begin{equation}
K_{0}(x)=-\left\{  \ln\frac{x}{2}+\gamma\right\}  I_{0}(x)+\sum_{n=1}%
^{+\infty}c_{n}\,\frac{x^{2n}}{\left(  2\,n!\right)  ^{2n}}\,, \label{K_0}%
\end{equation}
%-------------- %
where $\gamma\approx0.577$ is the Euler-Mascheroni constant,
$c_{n}=\sum _{k=1}^{n}\frac{1}{k}$, and $I_{0}(x)$ is the other
modified Bessel function
given by the series%
%-------------- %
\begin{equation}
I_{0}(x)=\sum_{n=0}^{+\infty}\,\frac{x^{2n}}{\left(  2n!\right)  ^{2n}}\,.
\label{I_0}%
\end{equation}
%-------------- %
In the following proof we will make use of the asymptotic properties of
$K_{0}$,
%-------------- %
\begin{equation}
\lim_{x\rightarrow0^{+}}K_{0}(x)=+\infty\,, \quad\lim_{x\rightarrow+\infty}
K_{0}(x)=0\,. \label{K_0 1}%
\end{equation}

%-------------- %
\begin{lemma}
Let $\lambda_{0}$ denote the first eigenvalue of $-\Delta_{\Omega}^{D}$
corresponding to the domain $\Omega\subset\mathbb{R}^{2}$. For any $\alpha
\in\mathbb{R}$, the equation
%-------------- %
\begin{equation}
\bigskip\alpha-\ln\sqrt{-\xi}-2\pi\,h(\underline{x}_{0},\underline{x}
_{0},\sqrt{-\xi})=0\,,\quad\xi\in\left(  -\infty,\lambda_{0} \right)  \,,
\label{eigenvalue_2d 0}%
\end{equation}
%-------------- %
admits an unique solution, denoted $\xi(\alpha)$, such that%
%-------------- %
\begin{equation}
\lim_{\alpha\rightarrow+\infty}\xi(\alpha)=-\infty\,, \qquad\xi(f(\underline
{x}_{0},\underline{x}_{0},0))=0\,, \label{condition 1.1}%
\end{equation}
%-------------- %
and
%-------------- %
\begin{equation}
\lim_{\alpha\rightarrow-\infty}\xi(\alpha)=\lambda_{0}\,,
\label{condition 2.1}%
\end{equation}
%-------------- %
where $f(\underline{x},\underline{x}_{0},\sqrt{-\xi})$ in the
second one of the relations (\ref{condition 1.1}) is a
$C^{+\infty}(\Omega)\cap C(\bar{\Omega})$--regular func\-tion of
both the spatial variables defined by
%-------------- %
\begin{equation}
f(\underline{x},\underline{x}_{0},\sqrt{-\xi})=2\pi\,h(\underline
{x},\underline{x}_{0},\sqrt{-\xi})+\ln\sqrt{-\xi}\,I_{0}(\sqrt{-\xi
}\,\left\vert \underline{x}-\underline{x}_{0}\right\vert )\,,\quad\xi
<\lambda_{0}\,. \label{f_2d}%
\end{equation}
%-------------- %
\end{lemma}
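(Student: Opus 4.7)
The plan is to adapt the proof of Lemma~\ref{Lemma 2} to two dimensions. The only substantive new difficulty is the logarithmic singularity of $\mathcal{G}^{z}$ as $\sqrt{z}\to 0$, which prevents us from evaluating $h(\underline{x}_{0},\underline{x}_{0},y)$ directly at $y=0$. The remedy is the regularised quantity $f$ introduced by (\ref{f_2d}): substituting $2\pi h = f - \ln\sqrt{-\xi}\,I_{0}(\sqrt{-\xi}\,|\underline{x}-\underline{x}_{0}|)$ into (\ref{eigenvalue_2d 0}) and using $I_{0}(0)=1$ collapses the equation at $\underline{x}=\underline{x}_{0}$ to $\alpha = f(\underline{x}_{0},\underline{x}_{0},\sqrt{-\xi})$. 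The problem thus reduces to the monotonicity and boundary behaviour of the scalar map $\sqrt{-\xi}\mapsto f(\underline{x}_{0},\underline{x}_{0},\sqrt{-\xi})$ on the two branches $\sqrt{-\xi}\in[0,+\infty)$ and $\sqrt{-\xi}\in i(0,\sqrt{\lambda_{0}})$.

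First I would verify the claimed $C^{\infty}(\Omega)\cap C(\bar\Omega)$ regularity of $f$. From the expansion (\ref{K_0}), the boundary datum $K_{0}(y|\underline{x}-\underline{x}_{0}|)$ of $2\pi h$ differs from $-\ln y\cdot I_{0}(y|\underline{x}-\underline{x}_{0}|)$ by a function jointly smooth in $(\underline{x},y)$ down to $y=0$; hence the boundary trace of $f$ is continuous (in fact analytic in $y^{2}$) at $y=0$. Since $I_{0}(y|\underline{x}-\underline{x}_{0}|)$ is regular at $\underline{x}_{0}$ and satisfies $(-\Delta+y^{2})\cdot = 0$ throughout $\Omega$, so does $f$, and standard elliptic regularity yields $f\in C^{\infty}(\Omega)\cap C(\bar\Omega)$; in particular $f(\underline{x}_{0},\underline{x}_{0},0)$ is a finite real number.

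For $\xi\leq 0$ set $y=\sqrt{-\xi}\geq 0$. Differentiation yields $\partial_{y} f(\underline{x}_{0},\underline{x}_{0},y) = 1/y + 2\pi\partial_{y} h(\underline{x}_{0},\underline{x}_{0},y)$, which is positive by the 2D line of (\ref{h 1.1}); thus $f$ is strictly increasing in $y$. At $y=0$ the value is the finite $f(\underline{x}_{0},\underline{x}_{0},0)$, while for $y\to+\infty$ the exponential decay $K_{0}(yr)\sim\sqrt{\pi/(2yr)}\,e^{-yr}$ makes the boundary datum of $h$ uniformly small, so the maximum principle forces $h(\underline{x}_{0},\underline{x}_{0},y)\to 0$ and hence $f\to+\infty$. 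The intermediate value theorem produces a unique $y(\alpha)$ for every $\alpha\geq f(\underline{x}_{0},\underline{x}_{0},0)$, and rewriting in the variable $\xi=-y^{2}$ gives (\ref{condition 1.1}). For $\xi\in(0,\lambda_{0})$ set $y=\sqrt{\xi}$, so $\sqrt{-\xi}=iy$ and $\ln\sqrt{-\xi}=\ln y+i\pi/2$. The 2D line of (\ref{h 1.2}) annihilates the imaginary part of (\ref{eigenvalue_2d 0}) identically, leaving $\alpha-\ln y-2\pi\operatorname{Re} h(\underline{x}_{0},\underline{x}_{0},iy)=0$; because $I_{0}(iyr)=J_{0}(yr)$ is real, this is exactly $\alpha = \operatorname{Re} f(\underline{x}_{0},\underline{x}_{0},iy)$. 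The 2D line of (\ref{h 1.3}) forces strict monotone decrease in $y$; continuity of $f$ gives the limit $f(\underline{x}_{0},\underline{x}_{0},0)$ as $y\to 0^{+}$, while $\operatorname{Re} h(\underline{x}_{0},\underline{x}_{0},iy)\to-\infty$ as $y\to\sqrt{\lambda_{0}}^{-}$ follows exactly as in (\ref{h_3d 4.3})--(\ref{h_3d 4.5}), the argument based on Lemma~2 of~\cite{Blanchard} being dimension-independent. The resulting strictly decreasing bijection of $(0,\sqrt{\lambda_{0}})$ onto $(-\infty,f(\underline{x}_{0},\underline{x}_{0},0))$ yields a unique $y(\alpha)$ for every $\alpha<f(\underline{x}_{0},\underline{x}_{0},0)$ and hence (\ref{condition 2.1}).

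The most delicate step is the regularity claim for $f$: one must extract from (\ref{K_0}) the precise cancellation between the $\ln y\,I_{0}(y|\underline{x}-\underline{x}_{0}|)$ subtracted in the definition of $f$ and the leading $-\ln y\cdot I_{0}$ term of $K_{0}(yr)$ on $\partial\Omega$, and then transfer the resulting regularity to the interior of $\Omega$ via the elliptic BVP satisfied by $f$. Once that is in place, the rest of the argument follows the 3D template of Lemma~\ref{Lemma 2} essentially verbatim, with $\sqrt{z}/(4\pi)$-terms replaced by $\ln\sqrt{z}$ and the 3D lines of Lemma~\ref{Lemma 1} replaced by the 2D ones.
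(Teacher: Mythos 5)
Your proposal reproduces the paper's own strategy essentially step for step: it isolates the $y\to 0$ difficulty as the logarithmic singularity of the 2D Green's function, introduces the same regularized function $f$ from (\ref{f_2d}) and notes $I_0(0)=1$ to identify $\alpha=f(\underline{x}_0,\underline{x}_0,\sqrt{-\xi})$, establishes monotonicity on each branch from the 2D lines of (\ref{h 1.1})--(\ref{h 1.3}), and obtains the divergence as $y\to\sqrt{\lambda_0}^{-}$ by the same appeal to the argument behind (\ref{h_3d 4.3})--(\ref{h_3d 4.5}). The argument is correct; you actually supply slightly more detail than the paper on two points -- the joint smoothness in $(\underline{x},y)$ of the boundary datum of $f$ near $y=0$, and the decay of $h$ as $y\to+\infty$ (the paper only uses $h>0$ plus $\ln y\to+\infty$) -- but the route is the same.
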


%-------------- %

\begin{proof}
The argument follows the same line as in Lemma \ref{Lemma 2}, the main
difference coming from the specific form of the Green functions in two
dimensions. For $\xi\leq0$ and $y:=\sqrt{\left\vert \xi\right\vert }$,
equation (\ref{eigenvalue_2d 0}) reads
%-------------- %
\begin{equation}
\bigskip\alpha=\ln y+2\pi\,h(\underline{x}_{0},\underline{x}_{0},y)\,,
\label{eigenvalue_2d 1}%
\end{equation}
%-------------- %
where $h(\underline{x},\underline{x}_{0},\sqrt{y})$, the solution
of the boundary value problem
%-------------- %
\begin{equation}
\left\{
\begin{array}
[c]{l}%
\left(  -\Delta+y^{2}\right)  \,h(\underline{x},\underline{x}_{0},y)=0\\
\left.  h(\underline{x},\underline{x}_{0},y)\right\vert _{\underline{x}%
\in\partial\Omega}=\left.  \frac{1}{2\pi}K_{0}(y\,\left\vert \underline
{x}-\underline{x}_{0}\right\vert )\right\vert _{\underline{x}\in\partial
\Omega}%
\end{array}
\right.  ;\quad\underline{x}_{0}\in\Omega\label{h_2d 1}%
\end{equation}
%-------------- %
is strictly positive in $\Omega$, as it follows from the maximum principle and
the positivity of the boundary values. Consequently, the r.h.s. of
(\ref{eigenvalue_2d 1}) diverges as $y\rightarrow+\infty$. Moreover, from
(\ref{h 1.1}) we know that $\ln y+2\pi\,h(\underline{x}_{0},\underline{x}%
_{0},y)$ is strictly increasing as a function of $y$ in the whole
$\mathbb{R}^{+}$. In order to study the behaviour for $y\rightarrow0$, we
introduce the auxiliary function $f(\underline{x},\underline{x}_{0},y)$,
defined in (\ref{f_2d}) which solves the boundary value problem
%-------------- %
\begin{equation}
\left\{
\begin{array}
[c]{l}%
\left(  -\Delta+y^{2}\right)  \,f(\underline{x},\underline{x}_{0},y)=0\\
\left.  h(\underline{x},\underline{x}_{0},y)\right\vert _{\underline{x}%
\in\partial\Omega}=\left.  \frac{1}{2\pi}K_{0}(y\,\left\vert \underline
{x}-\underline{x}_{0}\right\vert )+\ln y\,I_{0}(\sqrt{-\xi}\,\left\vert
\underline{x}-\underline{x}_{0}\right\vert )\right\vert _{\underline{x}%
\in\partial\Omega}%
\end{array}
\right.  ;\quad\underline{x}_{0}\in\Omega\label{f_2d 1}%
\end{equation}
%-------------- %
Using the definition (\ref{I_0}), it is easy to verify that the
quantity $\ln y+2\pi\,h(\underline{x}_{0},\underline{x}_{0},y)$
coincides with the value of
$f$ in $\underline{x}_{0}$, thus for $y\rightarrow0^{+}$ we have%
%-------------- %
\[
\lim_{y\rightarrow0^{+}}\left(  \ln y+2\pi\,h(\underline{x}_{0},\underline
{x}_{0},y)\right)  =\lim_{y^{+}\rightarrow0}f(\underline{x}_{0},\underline
{x}_{0},y)\,.
\]
%-------------- %
In this limit, the problem (\ref{f_2d 1}) simplifies to
%-------------- %
\begin{equation}
\left\{
\begin{array}
[c]{l}%
-\Delta\,f(\underline{x},\underline{x}_{0},0)=0\\
\left.  h(\underline{x},\underline{x}_{0},0)\right\vert _{\underline{x}%
\in\partial\Omega}=\left.  -\frac{1}{2\pi}\left\{  \ln\frac{\left\vert
\underline{x}-\underline{x}_{0}\right\vert }{2}+\gamma\right\}  \right\vert
_{\underline{x}\in\partial\Omega}%
\end{array}
\right.  ;\quad\underline{x}_{0}\in\Omega\label{f_2d 1.1}%
\end{equation}
 %-------------- %
where the representation (\ref{K_0}) has been considered; the
regularity of the boundary condition in (\ref{f_2d 1.1}) implies
$f(\underline{x} ,\underline{x}_{0},0)\in C^{+\infty}(\Omega)\cap
C(\bar{\Omega})$. So far we have shown that $\left(  \ln
y+2\pi\,h(\underline{x}_{0},\underline{x} _{0},y)\right)  $ is a
strictly increasing function of $y\in\mathbb{R}^{+}$
such that%
 %-------------- %
\begin{equation}
\lim_{y\rightarrow+\infty}\left(  \ln y+2\pi\,h(\underline{x}_{0}%
,\underline{x}_{0},y)\right)  =+\infty\label{condition 1.1.1}%
\end{equation}
%-------------- %
and
%-------------- %
\begin{equation}
\lim_{y\rightarrow0^{+}}\left(  \ln y+2\pi\,h(\underline{x}_{0},\underline
{x}_{0},y)\right)  =f(\underline{x}_{0},\underline{x}_{0},0)\,.
\label{condition 1.1.2}%
\end{equation}
 %-------------- %
From here we can conclude that the equation (\ref{eigenvalue_2d
0}) admits an unique solution, $\xi(\alpha)\leq0$, for any
$\alpha\in\left[  f(\underline{x} _{0},\underline{x}_{0},
0),+\infty\right)  $, which satisfies the conditions
(\ref{condition 1.1}).

In the case of a positive eigenvalue $\xi\in\left(  0,\lambda_{0}\right)  $,
setting $y=\sqrt{\xi}$ we rewrite equation (\ref{eigenvalue_2d 0}) as
%-------------- %
\begin{equation}
\alpha=\ln i\,y+2\pi\,h(\underline{x}_{0},\underline{x}_{0},i\,y)
\label{eigenvalue_2d 2}%
\end{equation}
 %-------------- %
which due to (\ref{h 1.2}) is equivalent to%

\begin{equation}
\alpha=\ln y+2\pi\operatorname{Re}h(\underline{x}_{0},\underline{x}%
_{0},i\,y)\,. \label{eigenvalue_2d 2.1}%
\end{equation}
%-------------- %
As in the 3D case, we notice that the r.h.s of (\ref{eigenvalue_2d 2.1}) is a
strictly decreasing function, cf. (\ref{h 1.3}), diverging as $y\rightarrow
\sqrt{\lambda_{0}}$; the sought conclusion easily follows.
\end{proof}

\smallskip

Since the spectrum of $H_{\alpha}$ is determined by the solutions of the
equations (\ref{eigenvalue 0}), the above lemmata have the following
implication which means that in a sense point interactions in dimension two
and three can be always regarded as ``attractive''.
%-------------- %

\begin{corollary}
For any real $\alpha$, the operator $H_{\alpha}$ defined by
(\ref{H_alpha_2d dom})--(\ref{H_alpha}) has a unique spectral point below the
spectral threshold of $-\Delta_{\Omega}^{D}$.
\end{corollary}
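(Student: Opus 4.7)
The plan is to deduce the corollary essentially as a direct reading of the two preceding lemmas together with the spectral characterization recorded just after equation (\ref{eigenvalue 0}). First I would recall that, by the Krein resolvent formulas (\ref{R_alpha_2d})--(\ref{R_alpha_3d}), every point in the spectrum of $H_\alpha$ is either (i) an eigenvalue of $-\Delta_\Omega^D$ whose eigenfunction vanishes at $\underline{x}_0$, or (ii) a solution of one of the equations in (\ref{eigenvalue 0}). Type (i) cannot contribute below the spectral threshold $\lambda_0$ of $-\Delta_\Omega^D$ simply because no eigenvalue of the Dirichlet Laplacian lies there. Therefore any spectral point of $H_\alpha$ lying in $(-\infty,\lambda_0)$ must come from type (ii).

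Next I would split into the two dimensions. For $\Omega\subset\mathbb{R}^3$, Lemma \ref{Lemma 2} states precisely that for every $\alpha\in\mathbb{R}$ the equation
\begin{equation*}
\alpha+\frac{\sqrt{-\xi}}{4\pi}+h(\underline{x}_0,\underline{x}_0,\sqrt{-\xi})=0
\end{equation*}
has exactly one solution $\xi(\alpha)\in(-\infty,\lambda_0)$, with the limiting behaviour (\ref{condition 1})--(\ref{condition 2}). For $\Omega\subset\mathbb{R}^2$, the analogous statement is the content of the second (unnumbered) lemma applied to equation (\ref{eigenvalue_2d 0}). In both cases existence and uniqueness have already been established, so the existence and uniqueness of the lowest spectral point of $H_\alpha$ follows at once.

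What remains is to argue that this unique solution really is a genuine spectral point and not spurious. This is automatic from the Krein formulas (\ref{R_alpha_2d})--(\ref{R_alpha_3d}): at $z=-\xi(\alpha)$ the denominator $\alpha-\ln\sqrt{z}-2\pi h(\underline{x}_0,\underline{x}_0,\sqrt{z})$ (resp.\ $\alpha+\tfrac{\sqrt{z}}{4\pi}+h(\underline{x}_0,\underline{x}_0,\sqrt{z})$) vanishes while the corresponding Krein rank-one term does not cancel identically, so $R_z^\alpha$ fails to be bounded there; equivalently, $\mathcal{G}_0^{-\xi(\alpha)}(\cdot,\underline{x}_0)$ is a non-zero $L^2$-function lying in $D(H_\alpha)$ whose image under $H_\alpha+\xi(\alpha)$ is zero.

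I do not see a genuine obstacle: everything non-trivial is already contained in Lemmas \ref{Lemma 1}--\ref{Lemma 2} and its 2D analogue. The only point to double-check is the simplicity claim (``unique''), which here is interpreted as uniqueness of the spectral point in $(-\infty,\lambda_0)$ rather than as simplicity of the corresponding eigenvalue; the latter is anyway clear from the rank-one nature of the Krein perturbation, since the eigenspace associated with $-\xi(\alpha)$ is spanned by the single function $\mathcal{G}_0^{-\xi(\alpha)}(\cdot,\underline{x}_0)$.
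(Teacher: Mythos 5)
Your proposal is correct and follows the same route the paper takes: the corollary is recorded as an immediate consequence of Lemma~\ref{Lemma 2} and its two-dimensional analogue together with the remark that the spectrum is governed by the equations (\ref{eigenvalue 0}), and the paper gives no further argument. Your only slip is a sign: since $R_{z}^{\alpha}=(H_{\alpha}+z)^{-1}$ and the Krein denominator vanishes at $z=-\xi(\alpha)$, the function $\mathcal{G}_{0}^{-\xi(\alpha)}(\cdot,\underline{x}_{0})$ is annihilated by $H_{\alpha}-\xi(\alpha)$, not by $H_{\alpha}+\xi(\alpha)$.
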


\setcounter{equation}{0}
%%%%%%%%%%%%%%%%%%%%%%%%%%%%%%%%%%%%%%%%%%%%%%%%%%

\section{Dependence of the principal eigenvalue on the position of the
interaction}

Now we pass to our main topic. We will characterize the behaviour of the
principal eigenvalue of the point-interaction Hamiltonians $H_{\alpha}$ for a
fixed bounded domain $\Omega\subset\mathbb{R}^{d}$, $d=2,3$, as the
interaction site moves towards the boundary of $\Omega$. We will restrict our
attention to domains having an interior reflection property w.r.t. a suitable
hyperplane, in the following sense:

\begin{definition}
Consider a hyperplane $P$ of dimension $d-1$ in $\mathbb{R}^{d}$ and denote by
$S^{P}$ the mirror image of a set $S\subset\mathbb{R}^{d}$ w.r.t. $P$ provided
$S\cap P=\varnothing$. The domain $\Omega$ is said to have the \emph{interior
reflection} property w.r.t. $P$ if $P\cap\Omega\neq\varnothing$ and there is
an open connected component $\Omega_{s}\subset\Omega\backslash P$ such that
$\Omega_{s}^{P}$ is a proper subset of $\Omega\backslash\bar{\Omega}_{s}$. We
call $\Omega_{s}$ the \emph{smaller side} of $\Omega$ and $P$ an
\emph{interior reflection} hyperplane.
\end{definition}

To prove our main result, we need following auxiliary statement.

\begin{lemma}
\label{Lemma 3}Let $\mathcal{G}_{0}^{z}(\underline{x},\underline{x}^{\prime})$
be defined by (\ref{Green 3})-(\ref{h 1}) and $z\in\mathbb{R}$. For values of
$z$ above $-\lambda_{0}$, the following implications hold,
%-------------- %
\begin{equation}
z\in\left[  0,+\infty\right)  \: \Longrightarrow\:\mathcal{G}_{0}%
^{z}(\underline{x},\underline{x}^{\prime})>0\quad\; \mathrm{in}\ \Omega
\label{Green 1.1}%
\end{equation}
 %-------------- %
and%
 %-------------- %
\begin{equation}
z\in\left(  -\lambda_{0},0\right)  :\Longrightarrow:\operatorname{Re}%
\mathcal{G}_{0}^{z}(\underline{x},\underline{x}^{\prime})>0\quad
\;\mathrm{in}\ \Omega\,.\label{Green 1.2}%
\end{equation}
 %-------------- %
\end{lemma}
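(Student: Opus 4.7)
The claim splits into two regimes for $z$, which I would treat by complementary arguments: a direct maximum-principle argument when $z\geq 0$, and a Dirichlet heat-kernel representation when $z\in(-\lambda_0,0)$, where $-\Delta+z$ has a negative zeroth-order coefficient and the classical elliptic maximum principle no longer applies.

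For $z\geq 0$, fix $\underline{x}'\in\Omega$ and set $u(\cdot)=\mathcal{G}_0^z(\cdot,\underline{x}')$. By (\ref{Green 1}), $u$ solves $(-\Delta+z)u=0$ in $\Omega\setminus\{\underline{x}'\}$ and vanishes on $\partial\Omega$. The decomposition (\ref{Green 3}), together with the explicit form of the free kernel in (\ref{Green free_2d})--(\ref{Green free_3d}) (positive and divergent at $\underline{x}'$ for $z\geq 0$) and the boundedness of $h(\cdot,\underline{x}',\sqrt{z})$ near $\underline{x}'$, implies that $u$ is strictly positive on $\partial B_\epsilon(\underline{x}')$ for $\epsilon$ small enough. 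Applying the weak minimum principle for $-\Delta+z$ on the annular region $\Omega\setminus\overline{B_\epsilon(\underline{x}')}$ (whose zeroth-order coefficient is non-negative) yields $u\geq 0$ there, hence, after $\epsilon\to 0$, also in $\Omega\setminus\{\underline{x}'\}$. The strong maximum principle then upgrades the inequality to $u>0$, since $u$ is non-trivial.

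For $z\in(-\lambda_0,0)$, the spectral expansion (\ref{Green 2}) is manifestly real on the resolvent set, so $\operatorname{Re}\mathcal{G}_0^z=\mathcal{G}_0^z$ and it suffices to establish pointwise positivity. I would pass to the Dirichlet heat-kernel representation
\[
\mathcal{G}_0^z(\underline{x},\underline{x}')=\int_0^{+\infty}e^{-zt}\,K^D_\Omega(t;\underline{x},\underline{x}')\,dt,
\]
which is valid because the heat kernel decays as $O(e^{-\lambda_0 t})$ at infinity (immediate from its own spectral expansion), whereas $e^{-zt}=e^{|z|t}$ with $|z|<\lambda_0$, making the integrand absolutely integrable in $t$. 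The parabolic strong maximum principle applied to $\partial_t-\Delta$ with Dirichlet boundary condition on $\partial\Omega$ gives $K^D_\Omega(t;\underline{x},\underline{x}')>0$ for every $t>0$ and interior $\underline{x},\underline{x}'$, and integration of a strictly positive, absolutely integrable function produces $\mathcal{G}_0^z(\underline{x},\underline{x}')>0$.

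The main obstacle I anticipate is the handling of the point singularity at $\underline{x}'$. In the first case the weak maximum principle cannot be applied on the whole of $\Omega$, where $u$ is not in $C^2$, which is what forces the excision of $B_\epsilon$ and the limiting argument. The heat-kernel route in the second case moves the singularity into the $t\downarrow 0$ end of a convergent integral and so avoids this difficulty entirely; the only nontrivial inputs there are the parabolic strict positivity and the spectral tail estimate, both classical for bounded piecewise-$C^1$ domains with Dirichlet conditions.
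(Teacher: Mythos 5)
Your proof is correct. For $z\geq 0$ your argument is essentially the one in the paper: excise a small ball $B_\epsilon(\underline{x}')$ where $\mathcal{G}_0^z$ is positive by the explicit singular behaviour of the free kernel and the boundedness of $h$, then apply the maximum principle for $-\Delta+z$ on the punctured domain. For $z\in(-\lambda_0,0)$, however, you take a genuinely different route from the paper. The paper repeats the excision argument and then invokes a maximum principle for $-\Delta+z$ with a strictly negative zeroth-order coefficient, asserting (without a citation or further comment) that ``under the condition $z>-\lambda_0$ we can still apply the maximum principle.'' This is correct but not elementary: it rests on a generalized (or refined) maximum principle which holds precisely because $z$ is above the bottom of the Dirichlet spectrum of $\Omega\setminus B_{\underline{x}'}$, via domain monotonicity of $\lambda_0$. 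Your alternative via the Dirichlet heat-kernel representation
\[
\mathcal{G}_0^z(\underline{x},\underline{x}')=\int_0^{+\infty}e^{-zt}\,K^D_\Omega(t;\underline{x},\underline{x}')\,dt,
\]
sidesteps this entirely: the only inputs are the parabolic strong maximum principle for $K^D_\Omega>0$ and the spectral tail estimate $K^D_\Omega(t;\cdot,\cdot)=O(e^{-\lambda_0 t})$, which makes the Laplace transform absolutely convergent for all real $z>-\lambda_0$. Your preliminary remark that $\mathcal{G}_0^z$ is already real for real $z$ in the resolvent set (so $\operatorname{Re}\mathcal{G}_0^z=\mathcal{G}_0^z$) is also correct and clarifies what the paper's $\operatorname{Re}$-notation is actually tracking, namely the complex book-keeping introduced by writing things in terms of $\sqrt{z}=i\sqrt{|z|}$. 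A small observation: since the heat-kernel representation and the positivity of $K^D_\Omega$ hold uniformly for all $z>-\lambda_0$, you could have used it for both regimes and dispensed with the elliptic argument altogether, which would make the proof shorter and would remove the need for the $\epsilon\to 0$ limiting step you flag as the main obstacle in the first part.
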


\begin{proof}
For $\Omega\subset\mathbb{R}^{3}$ and $z\in\left[  0,+\infty\right)  $, the
Green function%
 %-------------- %
\begin{equation}
\mathcal{G}_{0}^{z}(\underline{x},\underline{x}^{\prime})=\frac{e^{-\sqrt
{z}\left\vert \underline{x}-\underline{x}^{\prime}\right\vert }}%
{4\pi\left\vert \underline{x}-\underline{x}^{\prime}\right\vert }%
-h(\underline{x},\underline{x}^{\prime},\sqrt{z})\label{Green_3d 3.1}%
\end{equation}
 %-------------- %
is certainly positive in a small enough open neighbourhood
$B_{\underline{x} ^{\prime}}$ of the point
$\underline{x}^{\prime}$ due to the boundedness of
$h(\underline{x}, \underline{x}^{\prime},\sqrt{z})$. Moreover, it
solves the
boundary value problem%
 %-------------- %
\begin{equation}
\left\{
\begin{array}
[c]{l}%
\left(  -\Delta+z\right)  \mathcal{G}_{0}^{z}(\underline{x},\underline
{x}^{\prime})=0\quad in\ \Omega\backslash B_{\underline{x}^{\prime}}\\
\left.  \mathcal{G}_{0}^{z}\right\vert _{\partial\Omega}=0\,, \quad\left.
\mathcal{G}_{0}\right\vert _{\partial B_{\underline{x}^{\prime}}}>0
\end{array}
\right.  ;\qquad\underline{x}^{\prime}\in\Omega\label{Green_3d 4}%
\end{equation}
%-------------- %
It follows from the maximum principle that $\mathcal{G}_{0}^{z}$ is strictly
positive in the whole $\Omega$.

To prove the other implication in the 3D case, notice that for $z\in\left(
-\lambda_{0},0\right)  $ we have
%-------------- %
\begin{equation}
\operatorname{Re}\mathcal{G}_{0}^{z}(\underline{x},\underline{x}^{\prime
})=\frac{\cos\sqrt{\left\vert z\right\vert }\left\vert \underline
{x}-\underline{x}^{\prime}\right\vert }{4\pi\left\vert \underline
{x}-\underline{x}^{\prime}\right\vert }-\operatorname{Re}h(\underline
{x},\underline{x}^{\prime},i\sqrt{\left\vert z\right\vert })\,.
\label{Green_3d 3.2}%
\end{equation}
 %-------------- %
Once more we can find a suitable open neighbourhood of the point
$\underline{x}^{\prime}$, which we call
$B_{\underline{x}^{\prime}}$, where this function is positive. In
$\Omega\backslash B_{\underline{x}^{\prime}}$
$\operatorname{Re}\mathcal{G}_{0}^{z}(\underline{x},\underline{x}^{\prime})$
solves the boundary value problem%

\begin{equation}
\left\{
\begin{array}
[c]{l}%
\left(  -\Delta+z\right)  \operatorname{Re}\mathcal{G}_{0}^{z}(\underline
{x},\underline{x}^{\prime})=0\quad in\ \Omega\backslash B_{\underline
{x}^{\prime}}\\
\left.  \mathcal{G}_{0}\right\vert _{\partial\Omega}=0\,, \quad\left.
\mathcal{G}_{0}\right\vert _{\partial B_{\underline{x}^{\prime}}}>0
\end{array}
\right.  ;\qquad\underline{x}^{\prime}\in\Omega\label{Green_3d 5}%
\end{equation}
%-------------- %
Under the condition $z>-\lambda_{0}$ we can still apply the maximum principle
obtaining in this way $\operatorname{Re} \mathcal{G}_{0} ^{z}(\underline
{x},\underline{x}^{\prime})>0$ in$\ \Omega$. Finally, in the 2D case the proof
follows the same line with the replacement (\ref{Green_3d 3.1}) and
(\ref{Green_3d 3.2}) by the corresponding 2D Green's function
%-------------- %
\begin{equation}
\mathcal{G}_{0}^{z}(\underline{x},\underline{x}^{\prime})=K_{0}(\sqrt
{z}\left\vert \underline{x}-\underline{x}^{\prime}\right\vert )-h(\underline
{x},\underline{x}^{\prime},\sqrt{z}) \label{Green_2d 3.1}%
\end{equation}
 %-------------- %
and taking into account the asymptotic properties of $K_{0}(x)$ as
$x\rightarrow0$.
\end{proof}

\smallskip

Now we are in position to prove our main result. The next theorem shows that,
under the interior reflection conditions imposed on the domain $\Omega$, the
principal eigenvalue of $H_{\alpha}$ increases as the interaction site moves
towards the boundary of the smaller side $\Omega_{s}$ of $\Omega$.

\begin{theorem}
\label{th: main} \label{Theorem 1}Let $P$ be an interior reflection hyperplane
for the domain $\Omega$ and denote by $\underline{n}$ the normal vector to $P$
pointing towards $\Omega_{s}$. Assume that $\underline{x}_{0}\in\Omega
\cap\left(  \partial\Omega_{s}\cap P\right)  $; then the principal eigenvalue
$\xi$ of the point-interaction $H_{\alpha}$ with the perturbation placed at
$\underline{x}_{0}$ satisfies the condition
%-------------- %
\begin{equation}
\underline{n}\cdot\nabla_{\underline{x}_{0}}\xi>0\,. \label{eigenvalue_3d 3.1}%
\end{equation}
 %-------------- %
\end{theorem}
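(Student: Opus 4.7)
The plan is to couple implicit differentiation of the principal-eigenvalue equation \eqref{eigenvalue 0} with a reflection / maximum-principle argument in the spirit of the moving-plane technique used in \cite{Harrell}.

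First I regard $\xi$ as a function of $\underline{x}_{0}$ (with $\alpha$ fixed) and differentiate \eqref{eigenvalue 0}. Setting $y:=\sqrt{-\xi}$ in the regime $\xi<0$, or $y:=\sqrt{\xi}$ together with the passage to the real part via \eqref{h 1.2} in the regime $0<\xi<\lambda_{0}$, Lemma~\ref{Lemma 1} guarantees that the coefficient of $\nabla_{\underline{x}_{0}}\xi$ produced by the chain rule is of definite, nonzero sign. Consequently $\underline{n}\cdot\nabla_{\underline{x}_{0}}\xi$ and $\underline{n}\cdot\nabla_{\underline{x}_{0}}[h(\underline{x}_{0},\underline{x}_{0},y)]$ (or its real part, as appropriate) have the same sign. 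The spatial symmetry of $h$, inherited from that of $\mathcal{G}^{z}$ and $\mathcal{G}_{0}^{z}$, then yields
\[
\underline{n}\cdot\nabla_{\underline{x}_{0}}[h(\underline{x}_{0},\underline{x}_{0},y)]=2\,\underline{n}\cdot\nabla_{\underline{x}}h(\underline{x},\underline{x}_{0},y)\big|_{\underline{x}=\underline{x}_{0}},
\]
so the task reduces to proving that the normal derivative of $u(\underline{x}):=h(\underline{x},\underline{x}_{0},y)$ at $\underline{x}=\underline{x}_{0}$ in the direction $\underline{n}$ is strictly positive.

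Let $R$ denote orthogonal reflection across $P$ and, on the smaller side $\Omega_{s}$, define $w(\underline{x}):=u(\underline{x})-u(R\underline{x})$. Since $-\Delta$ commutes with $R$, $w$ satisfies $(-\Delta-\xi)w=0$ in $\Omega_{s}$, and it vanishes on $P\cap\overline{\Omega_{s}}$. On the remainder of $\partial\Omega_{s}$, which lies in $\partial\Omega$, the boundary condition in \eqref{h 1} gives $u(\underline{x})=\mathcal{G}^{-\xi}(\underline{x},\underline{x}_{0})$, whereas $R\underline{x}\in\Omega_{s}^{P}\subsetneq\Omega\setminus\bar{\Omega}_{s}$ is an interior point of $\Omega$, so $u(R\underline{x})=\mathcal{G}^{-\xi}(R\underline{x},\underline{x}_{0})-\mathcal{G}_{0}^{-\xi}(R\underline{x},\underline{x}_{0})$. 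Because $\underline{x}_{0}\in P$ implies $|R\underline{x}-\underline{x}_{0}|=|\underline{x}-\underline{x}_{0}|$, the radial free Green's functions cancel and $w(\underline{x})=\mathcal{G}_{0}^{-\xi}(R\underline{x},\underline{x}_{0})>0$ by Lemma~\ref{Lemma 3} (one works with $\operatorname{Re}w$ and the real-part statement of Lemma~\ref{Lemma 3} in the regime $0<\xi<\lambda_{0}$). The strong maximum principle then gives $w>0$ throughout $\Omega_{s}$; it remains applicable when the zero-order coefficient $-\xi$ is negative because $\xi<\lambda_{0}<\lambda_{0}(\Omega_{s})$ by Dirichlet eigenvalue monotonicity, so $-\Delta-\xi$ has a positive first Dirichlet eigenvalue on $\Omega_{s}$ and a positive strict supersolution is available.

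Finally, $\underline{x}_{0}$ lies on the hyperplane portion of $\partial\Omega_{s}$, which trivially satisfies the interior-ball condition, so Hopf's lemma applies and gives $\underline{n}\cdot\nabla w(\underline{x}_{0})>0$. Since $R$ fixes $P$ pointwise, $\nabla(u\circ R)(\underline{x}_{0})=\nabla u(\underline{x}_{0})-2(\underline{n}\cdot\nabla u(\underline{x}_{0}))\underline{n}$, whence $\nabla w(\underline{x}_{0})=2(\underline{n}\cdot\nabla u(\underline{x}_{0}))\,\underline{n}$ and therefore $\underline{n}\cdot\nabla u(\underline{x}_{0})>0$; combined with the first step this yields $\underline{n}\cdot\nabla_{\underline{x}_{0}}\xi>0$. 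The main obstacle I anticipate is the bookkeeping in the regime $0<\xi<\lambda_{0}$, where the governing elliptic operator has negative zero-order coefficient and one must systematically pass to real parts and invoke the quantitative form of the maximum principle afforded by the spectral gap. The 2D case proceeds identically after substituting the Macdonald-function representation of the free Green's function and the corresponding items of Lemmas~\ref{Lemma 1} and~\ref{Lemma 3}.
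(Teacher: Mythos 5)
Your proof follows essentially the same route as the paper: implicit differentiation of the eigenvalue equation with the sign control from Lemma~\ref{Lemma 1}, the symmetry relation for $\nabla_{\underline{x}_{0}}h(\underline{x}_{0},\underline{x}_{0},y)$, the reflection-difference function on $\Omega_{s}$ (your $w$ is the paper's $u$), positivity of its boundary trace via Lemma~\ref{Lemma 3}, the maximum principle, and Hopf's lemma at $\underline{x}_{0}\in P\cap\Omega$. You are slightly more explicit on two technical points the paper glosses over — the cancellation of the radial free Green's functions using $|R\underline{x}-\underline{x}_{0}|=|\underline{x}-\underline{x}_{0}|$, and the applicability of the maximum principle for $0<\xi<\lambda_{0}$ via the spectral gap $\xi<\lambda_{0}(\Omega)<\lambda_{0}(\Omega_{s})$ — but the argument is the same.
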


%-------------- %

\begin{proof}
Consider first the 3D case. To analyze the dependence of $\xi$ on the
interaction position $\underline{x}_{0}$, we have to distinguish between the
negative and positive spectral points. If $\alpha\in\left(  -\infty
,\,-h(\underline{x}_{0},\underline{x}_{0},0)\right]  $, then $\xi
=\xi(\underline{x}_{0})$ is by Lemma \ref{Lemma 2} a negative solution of
(\ref{eigenvalue_3d 1}). Replacing $\sqrt{\left\vert \xi\right\vert }$ with
$y(\underline{x}_{0})$ and taking the gradient w.r.t. $\underline{x}_{0}$ in
(\ref{eigenvalue_3d 1}) we find
%-------------- %
\begin{equation}
\nabla_{\underline{x}_{0}}y(\underline{x}_{0})\left(  \frac{1}{4\pi}%
+\partial_{y}h\right)  =-\nabla_{\underline{x}_{0}}h(\underline{x}%
_{0},\underline{x}_{0},y)\,. \label{eigenvalue_3d 3.3}%
\end{equation}
 %-------------- %
Next we consider the term $\nabla_{\underline{x}_{0}}
h(\underline{x} _{0},\underline{x}_{0},y)$ at the r.h.s. of the
last equation; under our interior reflection assumptions we will
show that this vector is oriented towards the smaller side of
$\Omega$. To this aim we notice that, in view of the relations
(\ref{Green 2})--(\ref{Green free_3d}), $h(\underline{x}
,\underline{x}^{\prime},y)$ can be written as%
 %-------------- %
\begin{equation}
h(\underline{x},\underline{x}^{\prime},y)=\frac{e^{-y\left\vert \underline
{x}-\underline{x}^{\prime}\right\vert }}{4\pi\left\vert \underline
{x}-\underline{x}^{\prime}\right\vert }-\sum_{\substack{n\in\mathbb{N}%
_{0}\\k\leq N_{n}}}\frac{\psi_{n,k}(\underline{x}^{\prime})\,\psi
_{n,k}(\underline{x})}{\lambda_{n}+y^{2}} \label{h_3d 5.1}%
\end{equation}
%-------------- %
for any $\underline{x}\neq\underline{x}^{\prime}$. From the symmetry of this
expression and the regularity $h(\underline{x},\underline{x}^{\prime},y)$ it
follows that
%-------------- %
\begin{equation}
\left.  \nabla_{\underline{x}^{\prime}}h(\underline{x},\underline{x}^{\prime
},y)\right\vert _{\underline{x}=\underline{x}^{\prime}=\underline{x}_{0}%
}=\left.  \nabla_{\underline{x}}h(\underline{x},\underline{x}^{\prime
},y)\right\vert _{\underline{x}=\underline{x}^{\prime}=\underline{x}_{0}}
\label{h_3d 6}%
\end{equation}
 %-------------- %
and%
 %-------------- %
\begin{equation}
\nabla_{\underline{x}_{0}}h(\underline{x}_{0},\underline{x}_{0},y)=2\left.
\nabla_{\underline{x}}h(\underline{x},\underline{x}^{\prime},y)\right\vert
_{\underline{x}=\underline{x}^{\prime}=\underline{x}_{0}}\,. \label{h_3d 7}%
\end{equation}
%-------------- %
To analyze the orientation of this vector, we introduce the function $u$
defined on the smaller part of $\Omega$ by
%-------------- %
\begin{equation}
u(\underline{x},\underline{x}_{0},y)=h(\underline{x},\underline{x}%
_{0},y)-h(\underline{x}^{P},\underline{x}_{0},y),\quad\underline{x}\in
\Omega_{s}\,, \label{u_3d 0}%
\end{equation}
 %-------------- %
where $\underline{x}^{P}$ denotes the mirror image of
$\underline{x} \in
\Omega_{s}$ through the plane $P$. The following equation holds%
 %-------------- %
\begin{equation}
\left\{
\begin{array}
[c]{l}%
\left(  -\Delta+y^{2}\right)  u=0\qquad\text{in }\Omega_{s}\\
\left.  u\right\vert _{P\cap\Omega}=0\,,\quad\left.  u\right\vert
_{\partial\Omega_{s}\cap\partial\Omega}=\left.  \frac{e^{-y\left\vert
\underline{x}-\underline{x}_{0}\right\vert }}{4\pi\left\vert \underline
{x}-\underline{x}_{0}\right\vert }-h(\underline{x}^{P},\underline{x}%
_{0},y)\right\vert _{\underline{x}\in\partial\Omega s\cap\partial\Omega}%
\end{array}
\right.  ; \quad\underline{x}_{0}\in\Omega\cap P \label{u_3d}%
\end{equation}
%-------------- %
It is worthwhile to notice that the boundary value on $\partial\Omega_{s}%
\cap\partial\Omega$ can be identified with the value of $\mathcal{G}%
_{0}^{y^{2}}(\underline{x},\underline{x}_{0})$ on the set $\left(
\partial\Omega s\cap\partial\Omega\right)  ^{P}$, indeed we have
%-------------- %
\begin{equation}
\left.  \mathcal{G}_{0}^{y^{2}}(\underline{x},\underline{x}_{0})\right\vert
_{\left(  \partial\Omega s\cap\partial\Omega\right)  ^{P}}=\left.
\frac{e^{-y\left\vert \underline{x}-\underline{x}_{0}\right\vert }}%
{4\pi\left\vert \underline{x}-\underline{x}_{0}\right\vert }-h(\underline
{x},\underline{x}_{0},y)\right\vert _{\left(  \partial\Omega s\cap
\partial\Omega\right)  ^{P}}=\left.  \frac{e^{-y\left\vert \underline
{x}-\underline{x}_{0}\right\vert }}{4\pi\left\vert \underline{x}-\underline
{x}_{0}\right\vert }-h(\underline{x}^{P},\underline{x}_{0},y)\right\vert
_{\partial\Omega s\cap\partial\Omega} \label{condition 3}%
\end{equation}
 %-------------- %
Then it follows from (\ref{Green 1.1}) that $u$ is positive on
$\partial\Omega
s\cap\partial\Omega$ and by the maximum principle, $u>0$ holds in $\Omega_{s}%
$. In particular, $u$ reaches its minimum on the points of the open surface
$P\cap\Omega$; the Hopf boundary-point lemma in this case implies%
 %-------------- %
\begin{equation}
\underline{n}\cdot\nabla_{\underline{x}}u>0\quad\; \mathrm{for}\;\;\forall
\underline{x}\in P\cap\Omega\,. \label{u_3d 1}%
\end{equation}
%-------------- %
Due to the definition (\ref{u_3d 0}), in combination with the relation
%-------------- %
\begin{equation}
\underline{n}\cdot\nabla_{\underline{x}}h(\underline{x}^{P},\underline{x}%
_{0},y)=-\underline{n}\cdot\nabla_{\underline{x}^{P}}h(\underline{x}%
^{P},\underline{x}_{0},y)\quad in\ \Omega_{s}\cup\left(  P\cap\Omega\right)
\,,
\end{equation}
 %-------------- %
the last inequality also implies%
 %-------------- %
\begin{equation}
2\,\underline{n}\cdot\nabla_{\underline{x}}h(\underline{x},\underline{x}%
_{0},y)>0\quad\;\mathrm{for}\;\;\forall\underline{x}\in P\cap\Omega
\,.\label{h_3d 7.1}%
\end{equation}
%-------------- %
Substituting (\ref{h_3d 7}) and (\ref{h_3d 7.1}) into the r.h.s. of
(\ref{eigenvalue_3d 3.3}) and taking the projection in the direction of the
vector $\underline{n}$ we get
%-------------- %
\begin{equation}
\underline{n}\cdot\nabla_{\underline{x}_{0}}y(\underline{x}_{0})\left(
\frac{1}{4\pi}+\partial_{y}h\right)  =-\underline{n}\cdot\nabla_{\underline
{x}_{0}}h(\underline{x}_{0},\underline{x}_{0},y)<0\,.\label{eigenvalue_3d
3.4}%
\end{equation}
 %-------------- %
The term $\nabla_{\underline{x}_{0}}y(\underline{x}_{0})$ at the
l.h.s. of
(\ref{eigenvalue_3d 3.4}) is related to $\nabla_{\underline{x}_{0}}\xi$ by%
 %-------------- %
\begin{equation}
\nabla_{\underline{x}_{0}}y=-\frac{1}{y}\nabla_{\underline{x}_{0}}%
\xi\label{eigenvalue_3d 3.4.1}%
\end{equation}
%-------------- %
from which it follows that
%-------------- %
\begin{equation}
\frac{1}{y}\left(  \frac{1}{4\pi}+\partial_{y}h\right)  \nabla_{\underline
{x}_{0}}\xi>0 \label{eigenvalue_3d 3.5}%
\end{equation}
 %-------------- %
The sought inequality (\ref{eigenvalue_3d 3.1}) follows easily
from (\ref{eigenvalue_3d 3.5}) taking into account the condition
(\ref{h 1.1}).

In the opposite case, $\alpha> -h(\underline{x}_{0}, \underline{x}_{0},0)$,
the first spectral point of $H_{\alpha}$ is a strictly positive solution of
the equation
%-------------- %
\begin{equation}
\alpha+\operatorname{Re}h(\underline{x}_{0},\underline{x}_{0},i\,\sqrt{\xi})=0
\label{eigenvalue_3d 4.1}%
\end{equation}
 %-------------- %
with $\xi<\lambda_{0}$ --- cf.~(\ref{eigenvalue_3d 2.1}) in Lemma
\ref{Lemma 2}. Replacing $\sqrt{\xi}$ with $y(\underline{x}_{0})$
and taking
the gradient w.r.t. $\underline{x}_{0}$ in (\ref{eigenvalue_3d 4.1}), we get%
 %-------------- %
\begin{equation}
\nabla_{\underline{x}_{0}}y\,\partial_{y} \operatorname{Re}h(\underline{x}
_{0},\underline{x}_{0},i\,y)= -\nabla_{\underline{x}_{0}}\operatorname{Re}
h(\underline{x}_{0},\underline{x}_{0},i\,y)\,. \label{eigenvalue_3d 4.2}%
\end{equation}
%-------------- %
In order to check the orientation of the vector at the r.h.s of this
expression, we notice again that due to the symmetry of the function
%-------------- %
\begin{equation}
\operatorname{Re}h(\underline{x},\underline{x}^{\prime},i\,y)=\frac{\cos
y\,\left\vert \underline{x}-\underline{x}^{\prime}\right\vert }{4\pi\left\vert
\underline{x}-\underline{x}^{\prime}\right\vert }-\sum_{\substack{n\in
\mathbb{N}_{0}\\k\leq N_{n}}}\frac{\psi_{n,k}(\underline{x}^{\prime}%
)\,\psi_{n,k}(\underline{x})}{\lambda_{n}-y^{2}} \label{h_3d 5.2}%
\end{equation}
 %-------------- %
the gradient $\nabla_{\underline{x}_{0}}\operatorname{Re}h(\underline{x}%
_{0},\underline{x}_{0},i\,y)$ can be expressed as%
 %-------------- %
\begin{equation}
\nabla_{\underline{x}_{0}}\operatorname{Re}h(\underline{x}_{0},\underline
{x}_{0},i\,y)=2\left.  \nabla_{\underline{x}}\operatorname{Re}h(\underline
{x},\underline{x}^{\prime},y)\right\vert _{\underline{x}=\underline{x}
^{\prime}=\underline{x}_{0}}\,. \label{h_3d 7.2}%
\end{equation}
%-------------- %
Then we follow the line of the first part of the proof introducing the
function $u$,
%-------------- %
\begin{equation}
u(\underline{x},\underline{x}_{0},y)=\operatorname{Re}h(\underline
{x},\underline{x}_{0},i\,y)-\operatorname{Re}h(\underline{x}^{P},\underline
{x}_{0},i\,y)\,,\quad\underline{x}\in\Omega_{s}\,. \label{u_3d 2}%
\end{equation}
 %-------------- %
Proceeding as before and taking into account the implication
(\ref{Green 1.2}) we find easily
 %-------------- %
\begin{equation}
\underline{n}\cdot\nabla_{\underline{x}}u>0
\end{equation}
%-------------- %
and
%-------------- %
\begin{equation}
2\,\underline{n}\cdot\nabla_{\underline{x}}\operatorname{Re}h(\underline
{x},\underline{x}_{0},y)>0\quad\; \mathrm{for}\;\; \forall\underline{x}\in
P\cap\Omega\,; \label{h_3d 7.3}%
\end{equation}
 %-------------- %
substituting (\ref{h_3d 7.2}) and (\ref{h_3d 7.3}) into the r.h.s.
of (\ref{eigenvalue_3d 4.2}), we conclude that
 %-------------- %
\begin{equation}
\underline{n}\cdot\nabla_{\underline{x}_{0}}y\,\partial_{y}\operatorname{Re}%
h(\underline{x}_{0},\underline{x}_{0},i\,y)<0 \label{eigenvalue_3d 4.3}%
\end{equation}
%-------------- %
The claim (\ref{eigenvalue_3d 3.1}) is then obtained from
(\ref{eigenvalue_3d 4.3}) by taking into account the relation
%-------------- %
\begin{equation}
\nabla_{\underline{x}_{0}}y=\frac{1}{y}\nabla_{\underline{x}_{0}}\xi
\end{equation}
 %-------------- %
and the inequality (\ref{h 1.3}) from Lemma \ref{Lemma 1}. This
concludes the argument in the three-dimensional case; the
two-dimensional one can be dealt with in the same way, step by
step.
\end{proof}

\setcounter{equation}{0}
%%%%%%%%%%%%%%%%%%%%%%%%%%%%%%%%%%%%%%%%%%%%%%%%%%

\section{Optimization of $\xi(\underline{x}_{0})$}

By Theorem \ref{Theorem 1} the spectral threshold of the operator $H_{\alpha}$
increases as the interaction position $\underline{x}_{0}$ moves towards the
boundary of the domain $\Omega$. This result provides us with some insights on
how to place the point-interaction centre to minimize the principal eigenvalue
of the Hamiltonian $H_{\alpha}$. For the sake of simplicity, we begin with the
case of a convex $\Omega$. Let $\Pi$ be the set of all the hyperplanes $P$ of
interior reflection for $\Omega$; we denote by $\Omega_{s,P}$ the smaller part
related to $P\in\Pi$, provided it exists, and by $\Sigma$ the union%
\begin{equation}
\Sigma={\bigcup\limits_{P\in\Pi}}\Omega_{s,P} \label{Sigma}%
\end{equation}

The following claim is a straightforward consequence of the Theorem
\ref{Theorem 1}.

\begin{corollary}
\label{Lemma 4}Let $\Omega$ be an open convex domain in $\mathbb{R}^{d} $,
$d=2,3$, and let $H_{\alpha}$ be a point-interaction operator in $\Omega$ with
the perturbation placed at $\underline{x}_{0}$. The principal eigenvalue of
$H_{\alpha}$, considered as a function of the interaction centre, takes its
minimum value when $\underline{x}_{0}$ belongs to the open set $\Omega
\backslash\Sigma$.
\end{corollary}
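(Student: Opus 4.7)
The plan is to argue by contradiction: assume the infimum of $\xi$ over $\Omega$ is attained at some $\underline{x}_{0}^{\ast}\in\Sigma$ and exhibit a direction at $\underline{x}_{0}^{\ast}$ along which $\xi$ strictly decreases, contradicting the minimality. The route is, given any $\underline{x}_{0}^{\ast}\in\Omega_{s,P}$, to produce a \emph{second} interior reflection hyperplane $P'$, parallel to $P$ and passing through $\underline{x}_{0}^{\ast}$; Theorem~\ref{Theorem 1} then applies to $P'$ at $\underline{x}_{0}^{\ast}$ and yields a non-vanishing component of $\nabla_{\underline{x}_{0}}\xi$.

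Since $\underline{x}_{0}^{\ast}\in\Sigma$, by definition there exists $P\in\Pi$ with $\underline{x}_{0}^{\ast}\in\Omega_{s,P}$. Choose Cartesian coordinates so that $P=\{x_{1}=0\}$, $\Omega_{s,P}=\Omega\cap\{x_{1}>0\}$, and $\underline{x}_{0}^{\ast}$ has first coordinate $t>0$. Set $P':=\{x_{1}=t\}$, so $\underline{x}_{0}^{\ast}\in P'\cap\Omega$. The key step is to verify that $P'$ is itself an interior reflection hyperplane with smaller side $\Omega_{s}':=\Omega\cap\{x_{1}>t\}$. Let $y=(y_{1},\ldots,y_{d})\in\Omega_{s}'$, so $y_{1}>t$. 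Its $P$-mirror $(-y_{1},y_{2},\ldots,y_{d})$ lies in $\Omega$ since $P$ is a reflection hyperplane for $\Omega$. By convexity of $\Omega$, the whole segment $\{(s,y_{2},\ldots,y_{d}):s\in[-y_{1},y_{1}]\}$ is contained in $\Omega$, and the $P'$-reflected point $(2t-y_{1},y_{2},\ldots,y_{d})$ lies on this segment because $0\leq t\leq y_{1}$ gives $-y_{1}\leq 2t-y_{1}\leq y_{1}$. Hence $(\Omega_{s}')^{P'}\subset\Omega\setminus\overline{\Omega_{s}'}$. Strict properness is inherited from $P$: any point of $\Omega\cap\{x_{1}<0\}$ whose $P$-mirror escapes $\Omega$ gives, along the same horizontal line and by the convexity argument just used, a point of $\Omega\cap\{x_{1}<t\}$ outside the $P'$-mirror of $\Omega_{s}'$.

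Finally, Theorem~\ref{Theorem 1} applied to $P'$ and $\underline{x}_{0}^{\ast}\in\Omega\cap(\partial\Omega_{s}'\cap P')$ yields $\underline{n}'\cdot\nabla_{\underline{x}_{0}}\xi(\underline{x}_{0}^{\ast})>0$, with $\underline{n}'$ the unit normal to $P'$ pointing into $\Omega_{s}'$. Thus $\xi$ strictly decreases in the direction $-\underline{n}'$, which rules out $\underline{x}_{0}^{\ast}$ as a minimizer and establishes that minima can only occur in $\Omega\setminus\Sigma$. The main obstacle is the construction of the auxiliary hyperplane $P'$: Theorem~\ref{Theorem 1} only gives information at points lying on some already-declared interior reflection hyperplane, and it is precisely the convexity assumption on $\Omega$ that allows us to promote an \emph{arbitrary} point of $\Sigma$ to such a position by translating $P$ parallel to itself. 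Without convexity the segment argument above fails, which is why this simple form of the statement is restricted to convex $\Omega$.
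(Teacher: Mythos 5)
Your proposal is correct and follows essentially the same route as the paper: argue by contradiction, take a parallel translate $P'$ of the original interior reflection hyperplane $P$ so that $P'$ passes through the supposed minimizer, and invoke Theorem~\ref{Theorem 1} at $P'$ to get a nonzero directional derivative. The only substantive difference is that you supply the convexity argument (the segment argument showing $(\Omega_s')^{P'}\subset\Omega$, plus the transfer of strict properness from $P$ to $P'$) that the paper compresses into the single phrase ``due to the convexity of the domain, it exists another hyperplane $P'\in\Pi$\dots''; that extra detail is a genuine and welcome tightening of the published argument rather than a different approach.
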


%-------------- %

\begin{proof}
Notice first that the continuity of $h(\underline{x},\underline{x}_{0},y)$
implies the continuity of solutions of the eigenvalue equations
(\ref{eigenvalue 0}), thus the principal eigenvalue $\xi(\underline{x}_{0})$
has at least one minimum point $\underline{x}_{0}^{m}\in\bar{\Omega}$. We use
reduction ad absurdum: assume $\underline{x}_{0}^{m}\in\Omega_{s,P}$ for a
suitable hyperplane $P\in\Pi$. Due to the convexity of the domain, it exists
another hyperplane $P^{\prime}\in\Pi$ parallel to $P$ and such that
 %-------------- %
\begin{equation}
\underline{x}_{0}^{m}\in\partial\Omega_{s,P^{\prime}}\cap P^{\prime}\,,
\label{condition 4}%
\end{equation}
 %-------------- %
however, under this assumption Theorem \ref{Theorem 1} implies the
inequality
 %-------------- %
\begin{equation}
\underline{n}\cdot\nabla_{\underline{x}_{0}}\xi>0 \label{condition 4.1}%
\end{equation}
 %-------------- %
from which a contradiction follows easily.
\end{proof}

In the case of highly symmetric domains such as the interior of a circle or an
ellipse in the plane, and similarly a ball and an interior of an ellipsoid in
three dimensions, it is easy to identify the set $\Sigma$ with the center of
such a domain. More generally, the convexity of $\Omega$ ensures the validity
of the interior reflection property needed in Theorem~\ref{Theorem 1} with
respect to some hyperplane passing through a point sufficiently close to the
boundary. In this situation the above result can be used to localize the
optimal position of $\underline{x}_{0}$ in a `central' subset of the domain
$\Omega$.

A slight generalization of the above argument lead us to an analogous rule to
localize the minimum points of the principal eigenvalue for a point
interaction within non-convex domains. Let $P\in\Pi$ and consider the one
parameter family of hyperplanes $P_{t}$,
%-------------- %
\begin{equation}
\left\{
\begin{array}
[c]{l}%
\bigskip P_{t}=\left\{  \underline{x}+\underline{n}\,t:\underline{x}\in
P,\ t\in\left[  0,T\right]  \right\} \\
T=\max\left\{  t\in\mathbb{R}^{+}:P_{t}\cap\Omega\neq\varnothing\right\}
\end{array}
\right.  \label{P_t def}%
\end{equation}
 %-------------- %
where $\underline{n}$ denotes the unit normal to $P$ directed
towards the smaller part $\Omega_{s}$ of $\Omega$. We denote as
$\Pi^{\prime}$ the subset
formed by all those hyperplanes $P\in\Pi$ such that%

\begin{equation}
P_{t}\in\Pi\quad\;\mathrm{for}\;\;\forall\,t\in\left[  0,T\right]  \,.
\label{condition 5}%
\end{equation}
%-------------- %
It is important to notice that in the non-convex case, to any hyperplane of
interior reflection there may correspond more than one smaller part. Next we
denote by $\Theta_{s,P}$ the union of all the smaller parts related to $P$,
and by $\Sigma^{\prime}$ the set
%-------------- %
\begin{equation}
\Sigma^{\prime}={%
%TCIMACRO{\dbigcup \limits_{P\in\Pi^{\prime}}}%
%BeginExpansion
{\displaystyle\bigcup\limits_{P\in\Pi^{\prime}}}
%EndExpansion
}\Theta_{s,P}\,. \label{Sigma'}%
\end{equation}
 %-------------- %

\begin{corollary}
\label{Lemma 5}Assume that $\Omega$ is an open domain in $\mathbb{R}^{d}$,
$d=2,3$, and $H_{\alpha}$ is a point-interaction operator in $\Omega$ with the
perturbation placed at $\underline{x}_{0}$. The principal eigenvalue of
$H_{\alpha}$, regarded as a function of $\underline{x}_{0}$, takes its minimum
value when $\underline{x}_{0}$ belongs to the open set $\Omega\backslash
\Sigma^{\prime}$.
\end{corollary}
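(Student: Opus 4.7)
The plan is to mimic the reductio ad absurdum used in Corollary \ref{Lemma 4}, with the role of the ``parallel translate of $P$'' now played explicitly by the sliding family $\{P_{t}\}_{t\in[0,T]}$ defined in (\ref{P_t def}); the membership condition $P\in\Pi'$ is exactly what is needed to guarantee that this sliding procedure stays inside the hypotheses of Theorem~\ref{Theorem 1} at every step.

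First, I would record the existence of a minimizer. Since $h(\underline{x},\underline{x}_{0},y)$ depends continuously on $\underline{x}_{0}\in\Omega$ and the solutions of the eigenvalue equations (\ref{eigenvalue 0}) depend continuously on the coefficients of those equations (uniqueness of the bottom solution was established in Lemmata~\ref{Lemma 2} and the 2D analogue), the function $\underline{x}_{0}\mapsto\xi(\underline{x}_{0})$ is continuous on the compact set $\bar{\Omega}$ and hence attains a minimum at some point $\underline{x}_{0}^{m}\in\bar{\Omega}$.

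Next, the argument by contradiction. Suppose $\underline{x}_{0}^{m}\in\Sigma'$, so there exists $P\in\Pi'$ and a connected component $\Omega_{s}\subset\Theta_{s,P}$ with $\underline{x}_{0}^{m}\in\Omega_{s}$. I would then consider the sliding family $\{P_{t}\}_{t\in[0,T]}$ of (\ref{P_t def}), along the normal $\underline{n}$ pointing into $\Omega_{s}$. The set $A:=\{t\in[0,T]:\underline{x}_{0}^{m}\in\Omega_{s,P_{t}}\text{ or lies on the smaller side of }P_{t}\}$ is non-empty (it contains $t=0$) and, by continuity of the translation together with the assumption $P_{t}\in\Pi$ for every $t\in[0,T]$ (which is the definition of $P\in\Pi'$), one can define $t^{\ast}:=\sup A$. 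Then at $t=t^{\ast}$ the point $\underline{x}_{0}^{m}$ lies on $P_{t^{\ast}}\cap\Omega$ and is a boundary point of the corresponding smaller side $\Omega_{s,P_{t^{\ast}}}$; this is exactly the configuration in which Theorem~\ref{Theorem 1} applies with the interior reflection hyperplane $P_{t^{\ast}}$.

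Finally, Theorem~\ref{Theorem 1} yields $\underline{n}\cdot\nabla_{\underline{x}_{0}}\xi(\underline{x}_{0}^{m})>0$, where $\underline{n}$ is the normal to $P_{t^{\ast}}$ (which coincides with the normal to $P$ by construction). This contradicts the fact that $\underline{x}_{0}^{m}$ is a minimizer of $\xi$, proving that any minimum point must lie in $\Omega\setminus\Sigma'$. The main obstacle I anticipate is making precise the existence of the critical parameter $t^{\ast}$ and verifying that $\underline{x}_{0}^{m}$ lies on $\partial\Omega_{s,P_{t^{\ast}}}\cap P_{t^{\ast}}$ rather than on $\partial\Omega$: the condition $P_{t}\in\Pi$ for all $t\in[0,T]$ (i.e.\ $P\in\Pi'$) is precisely designed to rule out the pathological case, typical of non-convex domains, in which sliding $P$ along $\underline{n}$ would cause the smaller side to disappear or cease to satisfy the interior reflection property before the hyperplane reaches $\underline{x}_{0}^{m}$.
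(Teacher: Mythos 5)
Your proposal is correct and follows exactly the route the paper intends: the paper's own proof of Corollary~\ref{Lemma 5} is a one-line remark that it is ``an easy modification'' of the proof of Corollary~\ref{Lemma 4}, and your sliding construction $\{P_t\}_{t\in[0,T]}$ with $t^\ast = \sup A$ is precisely the correct way to spell out what ``modification'' means, with the hypothesis $P\in\Pi'$ standing in for the role that convexity played in Corollary~\ref{Lemma 4} (namely, guaranteeing that the translated hyperplane through $\underline{x}_0^m$ is still an interior reflection hyperplane). You also correctly identify the one subtle point, which the paper itself leaves implicit: one must ensure that at $t=t^\ast$ the connected component of $\Omega\setminus P_{t^\ast}$ touching $\underline{x}_0^m$ from the $\underline{n}$-side is actually one of the \emph{smaller} parts of $P_{t^\ast}$, not just any component; the definition of $\Pi'$ ensures $P_{t^\ast}\in\Pi$, and the natural geometric reason the relevant component remains a smaller part is that the $\underline{n}$-side only shrinks under the sliding, so its reflection through $P_{t^\ast}$ continues to fit inside the complement — but this deserves a sentence, since it is not a literal consequence of the definition of $\Pi'$ as stated. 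Apart from that acknowledged gap (which is equally present in the paper's own terse argument), the structure — continuity and compactness for existence of a minimizer, reductio ad absurdum, Theorem~\ref{Theorem 1} producing a strictly positive directional derivative at a would-be interior minimizer — is exactly right.
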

%-------------- %

\begin{proof}
The argument is an easy modification of the proof of Corollary~\ref{Lemma 4}.
\end{proof}

\smallskip

Non-convex domains with a reasonably simple boundary such as, for instance,
the union of two intersecting disks or a dog-bone profile in two dimensions,
can be easily analyzed using Corollary~\ref{Lemma 5}. It is also worthwhile to
stress that the results of this section do not depend on the parameter
$\alpha$, hence the optimal placement of the point interaction with respect to
the minimum of the principal eigenvalues can be the same irrespective of the
interaction ``strength''.

Let us finally comment on he relation to the work \cite{Harrell} mentioned in
the introduction. We have said that for a hard-wall obstacle the principal
eigenvalue \emph{decreases} as it moves towards the boundary. The difference
of the two effects can be traced back to the different boundary conditions
which characterize the operator domains in the two cases. While the hard
obstacle is characterized by Dirichlet boundary condition, the
point-interaction operator $H_{\alpha}$ considered here can be obtained as the
norm-resolvent limit of a family of sphere interactions Hamiltonians
$H_{\alpha}(r)$ with the boundary condition of a \emph{mixed type} as the
radius $r\rightarrow0$. In the three-dimensional case, for instance, the
operator $H_{\alpha}(r)$ is explicitly given by
%-------------- %
\[
\left\{
\begin{array}
[c]{l}%
H_{\alpha}(r)=-\Delta\qquad\mathrm{on}\;\,\Omega\backslash S_{r}\\
D(H_{\alpha}(r))=\left\{  \left.  \psi\in H^{1}(\mathbb{R}^{3})\right\vert
\,\left(  \partial_{n}\psi\right)  _{+}-\left(  \partial_{n}\psi\right)
_{-}=\frac{1}{4\pi\alpha\,r^{2}+r}\psi\right\}
\end{array}
\right.
\]
%-------------- %
where $S_{r}$ denotes the sphere of radius $r$ centered at
$\underline{x}_{0}$ and $\left(  \partial_{n}\psi\right)
_{+}-\left(  \partial_{n}\psi\right) _{-}$ is the jump of the
normal derivative of $\psi$ on the interaction surface
\cite{Figari 1, Shimada}. Another insight into the difference of
the two situations can be obtained from \cite{ES96}.

%%%%%%%%%%%%%%%%%%%%%%%%%%%%%%%%%%%%%%%%%%%

\subsection*{Acknowledgments}

We wish to thank Rodolfo Figari and Francesco Chiacchio for their
useful remarks. This research was partially supported by GAAS and
MEYS of the Czech Republic under projects A100480501 and LC06002.

\bigskip

\end{document}